\documentclass[11pt]{article}
\usepackage[utf8]{inputenc}
\usepackage[T1]{fontenc}
\usepackage{amsmath}
\usepackage{amsfonts}
\usepackage{amssymb}
\usepackage{amsthm}
\usepackage{lineno}
\usepackage[version=4]{mhchem}
\usepackage{float}
\usepackage{url}
\usepackage{caption}
\usepackage{natbib}
\usepackage{subcaption}
\usepackage{placeins}
\usepackage{multirow}
\usepackage{stmaryrd}
\usepackage{graphicx} 
\usepackage{booktabs} 
\usepackage[a4paper, margin=1in]{geometry}
\usepackage{algorithm}
\usepackage{algpseudocode}
\usepackage{booktabs}
\usepackage{hyperref}
\theoremstyle{definition}

\newtheorem{theorem}{Theorem}[section]

\newtheorem{proposition}{Proposition}[section]

\newtheorem{corollary}{Corollary}[section]

\graphicspath{ {./figs/} }

\title{Machine Learning Integrated in Wavelet Shrinkage (MLShrink)}
\author{Dixon Vimalajeewa$^{1}$, Vijini Lakmini$^{2}$,  and Brani Vidakovic$^{2}$ \\
\\
{\small $^{1}$ Department of Statistics, University of Nebraska, Lincoln, NE 68588, USA}\\
{\small $^{2}$Department of Statistics, Texas A\&M University, College Station, TX 77843, USA} 
}

\date{\today}

\begin{document}

\maketitle


\begin{abstract}
Data encountered in practice are frequently contaminated by additive noise, and wavelet shrinkage remains a fundamental tool for recovering underlying signals in nonparametric estimation. Classical procedures such as hard and soft thresholding decide whether to retain a wavelet coefficient almost entirely from its magnitude. Although effective in many settings, these rules can be too rigid for coefficients whose magnitudes fall in an intermediate region where the distinction between signal and noise is uncertain.

 We propose \emph{MLShrink}, a two-threshold wavelet denoising procedure that combines wavelet shrinkage with machine learning. Coefficients below a lower threshold are discarded, coefficients above an upper threshold are retained, and coefficients in the intermediate band are classified using local wavelet-domain features. In this way, \emph{MLShrink} preserves the simplicity of classical thresholding away from the decision boundary while allowing data-adaptive decisions for ambiguous coefficients.

The paper also develops a theoretical framework tailored to this architecture. We show that \emph{MLShrink} is a nonexpansive support-selection rule, derive an oracle-based risk decomposition showing that excess denoising risk is determined by classification errors on the undecided band, and establish an oracle-consistency result under suitable assumptions on classifier performance.

Simulation experiments on standard benchmark signals indicate that \emph{MLShrink} is competitive with several established wavelet shrinkage methods and is especially effective for signals with irregular, edge-rich, or non-smooth structure. These findings suggest that learned decisions on the intermediate threshold band provide a useful and interpretable connection between classical wavelet denoising and modern statistical learning.

\end{abstract}
\section{Introduction}

Signal measurements collected in practice are almost always contaminated by additive noise, and a primary task in many scientific and engineering pipelines is therefore to recover a clean estimate of the underlying signal. Over the past several decades, wavelet methods have become a standard tool for this purpose in nonparametric statistics and signal processing because they represent signals locally in both time and frequency. This localization makes wavelet methods especially effective for signals with spatially inhomogeneous behavior, transient phenomena, edges, and abrupt structural changes \citep{vimalajeewa2023}.

A central wavelet-domain denoising strategy is \emph{wavelet shrinkage}, also called \emph{wavelet thresholding}. Following the seminal work of Donoho and Johnstone, wavelet shrinkage has become a cornerstone of nonparametric function estimation and signal denoising \citep{donoho1994ideal,donoho1995adapting,donoho1998minimax}. The basic idea is simple: transform the noisy signal into the wavelet domain, modify the empirical coefficients by a shrinkage rule, and reconstruct the signal by the inverse transform. This approach is effective because many signals admit sparse wavelet representations, so that a relatively small number of coefficients carry most of the structural information, while the noise is spread more diffusely across coefficients.

A broad range of shrinkage rules has been developed, including risk-based procedures, cross-validation methods, multiple-testing formulations, block thresholding rules, and Bayesian approaches \citep{antoniadis2001wavelet}. Important extensions include translation-invariant denoising, adaptive multiple-testing thresholding, firm shrinkage, Bayesian thresholding, block thresholding, neighboring-coefficient rules, and empirical-Bayes threshold selection \citep{CoifmanDonoho1995TI,AbramovichBenjamini1996,GaoBruce1997,AbramovichSapatinasSilverman1998,HallKerkyacharianPicard1998,Cai1999,CaiSilverman2001,JohnstoneSilverman2005}.

In most of these methods, however, the essential decision is still the same: determine which empirical wavelet coefficients should be retained as signal and which should be suppressed as noise. Classical rules such as hard and soft thresholding make this decision using a single threshold. Although these methods are simple and often effective, they can be too rigid near the decision boundary. Very small coefficients are usually dominated by noise, and very large coefficients are usually informative. The main difficulty arises for coefficients whose magnitudes fall in an intermediate region where the distinction between signal and noise is unclear. In that regime, a single-threshold rule may either discard useful structure or retain noisy artifacts.

In this paper we propose \emph{MLShrink}, a two-threshold wavelet denoising procedure that combines classical wavelet shrinkage with machine learning. Instead of relying on a single threshold, \emph{MLShrink} uses two thresholds, $\lambda_1$ and $\lambda_2$, with $0 < \lambda_1 < \lambda_2$. Coefficients with $|d_{j,k}| \le \lambda_1$ are treated as clear noise and discarded, while coefficients with $|d_{j,k}| \ge \lambda_2$ are treated as clear signal and retained. Coefficients in the intermediate band $\lambda_1 < |d_{j,k}| < \lambda_2$ are regarded as \emph{undecided}. Their final status is determined by a classifier trained on the confidently labeled coefficients and informed by local wavelet-domain features.

This formulation is attractive because it separates easy decisions from difficult ones. Coefficients far below the lower threshold and far above the upper threshold require no sophisticated treatment, while only the ambiguous middle band is subjected to additional modeling. In this sense, the novelty of \emph{MLShrink} is localized exactly where classical thresholding is most vulnerable. The method is also naturally flexible: once the two-threshold architecture is fixed, different learning algorithms may be used for the undecided band without changing the overall shrinkage mechanism.

\emph{MLShrink} is related to earlier multithreshold ideas, including semi-soft thresholding and previously proposed semi-supervised wavelet shrinkage methods. The present method differs in how the intermediate-band coefficients are handled. Semi-soft thresholding replaces the abrupt threshold decision by a deterministic linear transition, whereas earlier semi-supervised wavelet shrinkage methods rely on a specific manifold-regularization mechanism. In contrast, \emph{MLShrink} treats the middle-band decision as a classification problem and allows a broader family of learning algorithms, including logistic regression, support vector machines, decision trees, random forests, and neural networks. 
On the methodological side, \emph{MLShrink} is especially close in spirit to firm shrinkage, neighboring-coefficient methods, and Bayesian wavelet thresholding, but differs in replacing a deterministic or model-specific middle-band rule by a classifier-based decision on the undecided band \citep{GaoBruce1997,CaiSilverman2001,AbramovichSapatinasSilverman1998}.
This preserves the underlying wavelet shrinkage architecture while allowing more adaptive decisions in the most ambiguous region. The same viewpoint also leads naturally to the theory developed later in the paper: outside the intermediate band, \emph{MLShrink} behaves like an ordinary deterministic thresholding rule, while its statistical gain or loss is governed by the quality of classification on the undecided band.

To assess empirical performance, we study \emph{MLShrink} on several standard benchmark signals observed under multiple noise levels and compare it with a collection of established wavelet shrinkage methods. The numerical results indicate that \emph{MLShrink} is competitive across a range of settings and is especially promising for signals with irregular, edge-rich, or non-smooth structure. These findings suggest that learned decisions on the intermediate threshold band provide a useful and interpretable bridge between classical wavelet denoising and modern statistical learning.

The remainder of the paper is organized as follows. Section~\ref{sec_prelm_methods} reviews the wavelet-domain background needed for the development of \emph{MLShrink}. Section~\ref{sec_MLShrink} introduces the \emph{MLShrink} procedure. Section~\ref{sec_theory}  develops its theoretical properties. Section~\ref{sec_perfom_eval}  presents parameter selection, simulation design, and comparative performance evaluation. Section~\ref{sec_discussion} discusses the implications of the results and possible extensions. Section~\ref{sec_conclusion} concludes the paper.


\section{Preliminaries}\label{sec_prelm_methods}
This section briefly reviews the statistical model, the wavelet-domain representation, and the main thresholding ideas that motivate \emph{MLShrink}.
Suppose we observe noisy data
\begin{eqnarray}
y_i = f(t_i) + \sigma \varepsilon_i, \qquad i=1,2,\ldots,n,
\label{eq:model_time}
\end{eqnarray}

where $f$ is the unknown regression function of interest, $\sigma>0$ is the noise level, and the errors $\varepsilon_i$ are independent standard normal random variables. The goal is to estimate $f$ from the contaminated observations with small mean squared error. For an estimator $\hat f$, we measure performance by the empirical $L_2$ risk
\begin{eqnarray}
R(f,\hat f) = \frac{1}{n}\sum_{i=1}^n E\Bigl(f(t_i)-\hat f(t_i)\Bigr)^2.
\label{eq:risk_prelim}
\end{eqnarray}

Let $W$ denote an orthonormal discrete wavelet transform matrix, and define the empirical wavelet coefficients and true wavelet coefficients by
\begin{eqnarray}
d = Wy, \qquad \theta = Wf.
\end{eqnarray}

Because $W$ is orthonormal, the Gaussian noise structure is preserved in the transform domain, so the model becomes
\begin{eqnarray}
d_{j,k} = \theta_{j,k} + \sigma z_{j,k},
\label{eq:gaussian_sequence}
\end{eqnarray}

where the $z_{j,k}$ are again standard normal random variables. This representation is fundamental in wavelet denoising: many signals have sparse wavelet expansions, whereas noise is spread more diffusely across coefficients. As a result, denoising can be carried out by shrinking or thresholding the empirical coefficients $d_{j,k}$ and then reconstructing the signal by the inverse transform.
The simplest and most widely used rule is hard thresholding. Given a threshold $\lambda>0$, the hard-threshold estimator is
\begin{eqnarray}
\hat d^{\,H}_{j,k} =
\left\{
\begin{array}{ll}
d_{j,k}, & |d_{j,k}| \ge \lambda, \\[1ex]
0, & |d_{j,k}| < \lambda.
\end{array}
\right.
\label{eq:hard_prelim}
\end{eqnarray}

A common default choice is the universal threshold
\begin{eqnarray}
\lambda = \hat\sigma \sqrt{2\log n},
\label{eq:universal_prelim}
\end{eqnarray}

where $\hat\sigma$ is a noise estimate, often obtained from the finest-scale detail 
\citep{donoho1994ideal}. The corresponding signal estimate is then reconstructed as
\begin{eqnarray}
\hat y = W^T \hat d.
\end{eqnarray}

Hard thresholding is attractive because of its simplicity and its ability to preserve large coefficients exactly. Its main weakness is that the keep-or-kill decision is abrupt near the threshold, which can lead to instability and loss of moderate but informative coefficients.

To reduce this rigidity, several multithreshold rules have been proposed. One important example is semi-soft thresholding, also known as firm shrinkage. Given two thresholds $0<\lambda_1<\lambda_2$, the semi-soft rule is
\begin{eqnarray}
S(d_{j,k})=
\left\{
\begin{array}{ll}
0, & |d_{j,k}| \le \lambda_1, \\[1.2ex]
\operatorname{sign}(d_{j,k})
\frac{\lambda_2\bigl(|d_{j,k}|-\lambda_1\bigr)}{\lambda_2-\lambda_1},
& \lambda_1 < |d_{j,k}| < \lambda_2, \\[2ex]
d_{j,k}, & |d_{j,k}| \ge \lambda_2.
\end{array}
\right.
\label{eq:semisoft_prelim}
\end{eqnarray}

This rule interpolates between hard and soft thresholding. Coefficients below $\lambda_1$ are discarded, coefficients above $\lambda_2$ are retained, and coefficients in the transition region are modified by a deterministic linear rule. The advantage is continuity; the limitation is that the intermediate-band decision still depends only on magnitude. To reduce this rigidity, several multithreshold and nonconvex shrinkage rules have been proposed. One important example is semi-soft thresholding, also known as firm shrinkage. Another influential example is SCAD-type wavelet thresholding, which also introduces a more flexible transition between suppression and retention than classical hard or soft thresholding \citep{GaoBruce1997,AntoniadisFan2001,KudryavtsevShestakov2024,Kulkarnietal2026}.

A second line of work, closer in spirit to the present paper, treats the intermediate region as a classification problem. More broadly, the idea of borrowing strength from nearby coefficients has substantial precedent in neighboring-coefficient and block-thresholding methods, where local context is used to stabilize coefficient selection and improve risk performance \citep{HallKerkyacharianPicard1998,Cai1999,CaiSilverman2001}. Coefficients far below a lower threshold are labeled as noise, coefficients far above an upper threshold are labeled as signal, and coefficients in the middle band are treated as uncertain. Earlier semi-supervised wavelet shrinkage methods resolve these uncertain coefficients using manifold-regularization ideas and neighborhood information. This viewpoint is important because it suggests that the main statistical difficulty lies not in the clearly small or clearly large coefficients, but in the ambiguous middle band. \emph{MLShrink} builds directly on this perspective by replacing a fixed deterministic rule on the undecided band with a learned decision rule based on local wavelet-domain features.

\section{MLShrink}\label{sec_MLShrink}

We now introduce \emph{MLShrink}, a two-threshold wavelet shrinkage procedure in which the coefficients that are clearly small or clearly large are handled deterministically, while coefficients in an intermediate band are classified by a learning algorithm.

Let $d_{j,k}$ denote a detail wavelet coefficient at resolution level $j$ and location $k$. The \emph{MLShrink} rule is based on two thresholds,
\begin{eqnarray}
\lambda_1 = \hat{\sigma}\sqrt{c\log n}, \qquad
\lambda_2 = \hat{\sigma}\sqrt{2\log n}, \qquad 0<c<2,
\label{eq:MLShrink_thresholds}
\end{eqnarray}

where $\hat{\sigma}$ is an estimate of the noise standard deviation. The restriction $0<c<2$ guarantees that $\lambda_1<\lambda_2$, so that a genuine undecided band is present.

Thus, every empirical detail coefficient is assigned to one of three regions (see Figure~\ref{fig:MLShrink}):
\begin{eqnarray}
\mathcal{R}_0 &=& \{(j,k): |d_{j,k}| \le \lambda_1\}, \nonumber\\
\mathcal{R}_u &=& \{(j,k): \lambda_1 < |d_{j,k}| < \lambda_2\}, \label{eq:MLShrink_regions}\\
\mathcal{R}_1 &=& \{(j,k): |d_{j,k}| \ge \lambda_2\}. \nonumber
\end{eqnarray}

The set $\mathcal{R}_0$ contains coefficients treated as noise, $\mathcal{R}_1$ contains coefficients treated as signal, and $\mathcal{R}_u$ is the undecided band.
\begin{figure}
	\centering
\includegraphics[width=.8\columnwidth]{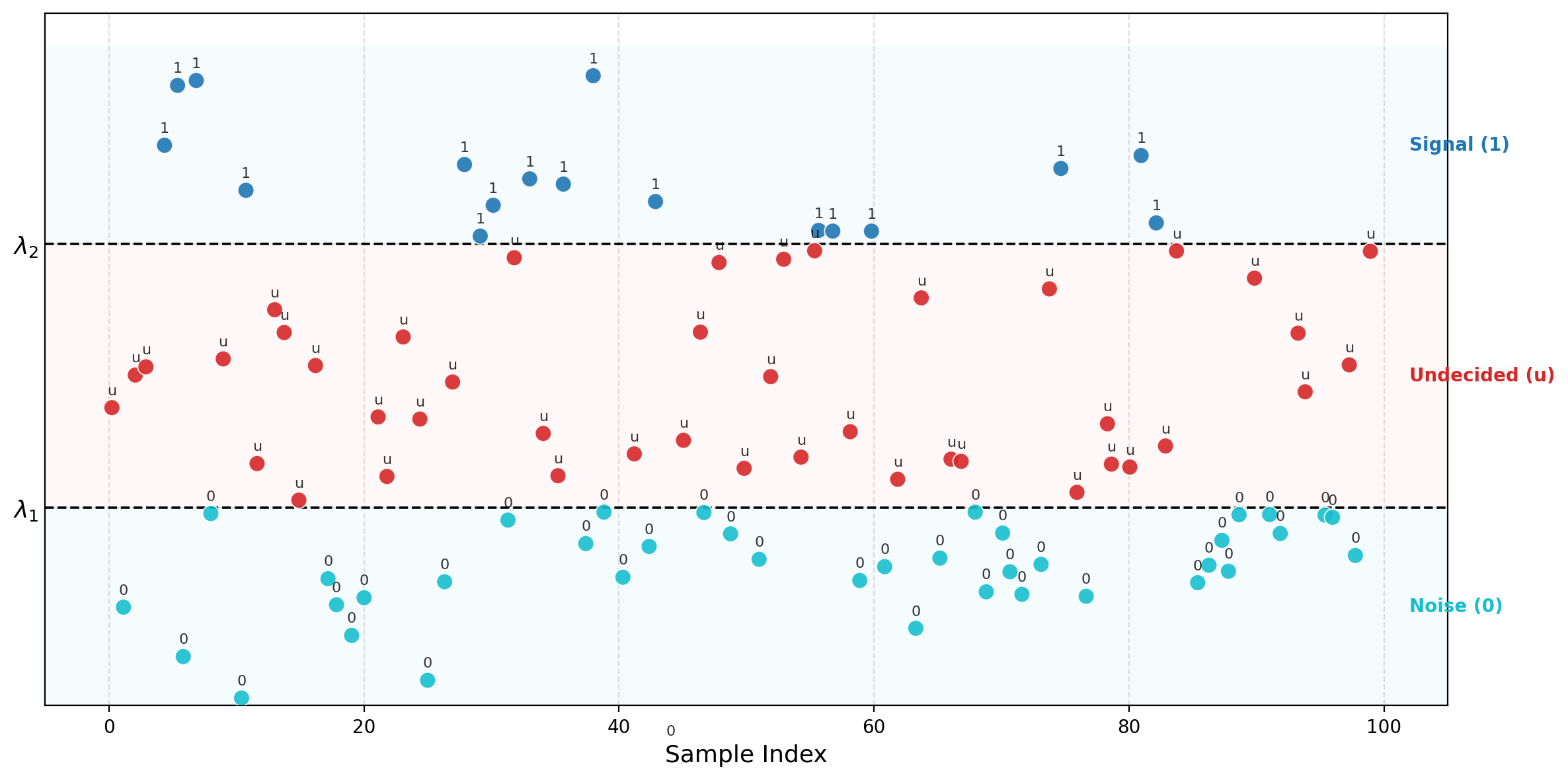}
	\caption{Schematic representation of the \emph{MLShrink} rule. The thresholds $\lambda_1$ and $\lambda_2$ divide empirical wavelet coefficients into three groups: coefficients below $\lambda_1$ are treated as noise, coefficients above $\lambda_2$ are treated as signal, and coefficients in the intermediate band are classified by a learning algorithm.}
\label{fig:MLShrink}
\end{figure}

It is convenient to define the initial label
$$
L_{j,k}^{(0)} =
\left\{
\begin{array}{ll}
0, & (j,k)\in\mathcal{R}_0, \\[1ex]
u, & (j,k)\in\mathcal{R}_u, \\[1ex]
1, & (j,k)\in\mathcal{R}_1.
\end{array}
\right.
$$

Equivalently,
$$
L_{j,k}^{(0)} =
\left\{
\begin{array}{ll}
0, & |d_{j,k}| \le \lambda_1, \\[1ex]
u, & \lambda_1 < |d_{j,k}| < \lambda_2, \\[1ex]
1, & |d_{j,k}| \ge \lambda_2.
\end{array}
\right.
$$

The clearly labeled coefficients form the training set,
\begin{eqnarray}
\mathcal{D}_{\rm lab}
=
\{(d_{j,k},L_{j,k}^{(0)}): L_{j,k}^{(0)}\in\{0,1\}\},
\end{eqnarray}

and the ambiguous coefficients form the undecided set,
\begin{eqnarray}
\mathcal{D}_{u}
=
\{(d_{j,k},L_{j,k}^{(0)}): L_{j,k}^{(0)}=u\}.
\end{eqnarray}

For each coefficient $d_{j,k}$ we construct a feature vector
\begin{eqnarray}
x_{j,k} = \Bigl(|d_{j,k}|,\, j,\, \nu_{j,k}\Bigr)^T \in \mathbb{R}^3,
\label{eq:feature_vector}
\end{eqnarray}

where the local neighborhood summary is defined by
\begin{eqnarray}
\nu_{j,k} = \frac{|d_{j,k-1}| + |d_{j,k+1}|}{2}.
\label{eq:local_summary}
\end{eqnarray}

Thus the features incorporate the coefficient magnitude, the resolution level, and a same-scale local context term. For boundary coefficients, one may use the available neighbor only or a simple boundary convention; this has negligible impact on the implementation.

Let

\begin{eqnarray}
X = \{x_{j,k}\}, \qquad
X_{\rm train} = \{x_{j,k}:(d_{j,k},L_{j,k}^{(0)})\in\mathcal{D}_{\rm lab}\},
\qquad
Y_{\rm train} = \{L_{j,k}^{(0)}:(d_{j,k},L_{j,k}^{(0)})\in\mathcal{D}_{\rm lab}\}.
\end{eqnarray}

A classifier $\mathcal{C}$ is trained on the confidently labeled coefficients:
\begin{eqnarray}
\mathcal{C}: \mathbb{R}^3 \longrightarrow \{0,1\}.
\end{eqnarray}

Although the labels are generated by the two-threshold rule, the learning step itself is an ordinary supervised classification problem carried out on the labeled subset. For each undecided coefficient $(j,k)\in R_u$, the trained classifier produces the predicted label
$$
\hat L_{j,k} = C(x_{j,k}).
$$

For notational convenience, define the final label for all coefficients by
$$
\tilde L_{j,k} =
\left\{
\begin{array}{ll}
0, & (j,k)\in R_0,\\
\hat L_{j,k}, & (j,k)\in R_u,\\
1, & (j,k)\in R_1.
\end{array}
\right.
$$

The \emph{MLShrink} estimator in the wavelet domain is then
\begin{eqnarray}
\hat d^{\,MLShrink}_{j,k} = d_{j,k}\tilde L_{j,k}.
\end{eqnarray}

Equivalently,
\begin{eqnarray}
\hat d^{\,MLShrink}_{j,k}
=
d_{j,k}{\bf 1}\{|d_{j,k}|\ge \lambda_2\}
+
d_{j,k}{\bf 1}\{\lambda_1<|d_{j,k}|<\lambda_2\}\hat L_{j,k}.
\end{eqnarray}

Hence \emph{MLShrink} coincides with an ordinary two-threshold keep-discard rule outside the intermediate band and differs from classical thresholding only through the learned decisions on $R_u$. After thresholding the detail coefficients, the denoised signal is reconstructed by the inverse wavelet transform,
\begin{eqnarray}
\hat y = W^T \hat d^{\,MLShrink},
\end{eqnarray}

while the scaling coefficients are left unchanged. The procedure is classifier-agnostic: in our experiments we consider Logistic Regression (\emph{LR}), Support Vector Machines (\emph{SVM}), Decision Trees (\emph{DT}), Random Forests (\emph{RF}), and Neural Networks (\emph{NN}), but the overall shrinkage architecture remains unchanged. For completeness and reproducibility, a step-by-step
algorithmic description of the \emph{MLShrink} procedure is
provided in Algorithm~S1 in the Supplementary Information.

The procedure is classifier-agnostic. In our experiments we consider logistic regression, support vector machines, decision trees, random forests, and neural networks. The specific choice of classifier affects only the prediction rule on the undecided band; the overall shrinkage architecture remains unchanged. This formulation is particularly convenient for the theoretical development in the next section, since the contribution of \emph{MLShrink} can be isolated to the classification of coefficients with $\lambda_1<|d_{j,k}|<\lambda_2$.

\section{Theoretical Properties of MLShrink}\label{sec_theory}

In this section we give a theoretical characterization of \emph{MLShrink}. The main point is that \emph{MLShrink} is not a new continuous shrinkage family, but rather a two-threshold support-selection rule in which the only genuinely nontrivial statistical decision is made on the undecided band. Accordingly, the most natural theory is one that isolates the role of classification on that band and quantifies how classification quality affects denoising risk.

We work in the Gaussian sequence model associated with an orthonormal wavelet transform:
\begin{eqnarray}
d_{j,k} = \theta_{j,k} + \sigma z_{j,k},
\qquad
z_{j,k}\sim N(0,1),
\label{eq:theory_model}
\end{eqnarray}

where the $z_{j,k}$ are independent, or approximately independent under the orthonormal transform. Let $\lambda_1$ and $\lambda_2$ satisfy
\begin{eqnarray}
0 < \lambda_1 < \lambda_2.
\label{eq:theory_threshold_order}
\end{eqnarray}

As in Section~\ref{sec_MLShrink}, define the three coefficient regions
\begin{eqnarray}
R_0 &=& \{(j,k): |d_{j,k}| \le \lambda_1\}, \nonumber\\
R_u &=& \{(j,k): \lambda_1 < |d_{j,k}| < \lambda_2\}, \label{eq:theory_regions}\\
R_1 &=& \{(j,k): |d_{j,k}| \ge \lambda_2\}. \nonumber
\end{eqnarray}

Then the \emph{MLShrink} estimator can be written as
\begin{eqnarray}
\hat d_{j,k}^{\,MLShrink}
=
d_{j,k}{\bf 1}\{|d_{j,k}| \ge \lambda_2\}
+
d_{j,k}{\bf 1}\{\lambda_1 < |d_{j,k}| < \lambda_2\}\hat L_{j,k},
\label{eq:mlshrink_theory_form}
\end{eqnarray}

where $\hat L_{j,k}\in\{0,1\}$ is the predicted label for coefficients in the undecided band.

\subsection{Basic structural properties}

\begin{proposition}\label{prop:mlshrink_structure}
For every coefficient $(j,k)$, the MLShrink estimator in \eqref{eq:mlshrink_theory_form} satisfies
\begin{eqnarray}
\bigl|\hat d_{j,k}^{\,MLShrink}\bigr| \le |d_{j,k}|.
\label{eq:nonexpansive}
\end{eqnarray}

Hence MLShrink is nonexpansive in magnitude. Moreover, if the predicted label $\hat L_{j,k}$ depends only on sign-invariant features such as $|d_{j,k}|$, $j$, and local magnitude summaries, then MLShrink is sign-preserving in the sense that
\begin{equation}
\begin{split}
\hat d_{j,k}^{\,MLShrink}=0
\qquad\mbox{or}\qquad
\operatorname{sgn}\Bigl(\hat d_{j,k}^{\,MLShrink}\Bigr)=\operatorname{sgn}(d_{j,k}).
\label{eq:sign_preserving}
 \end{split}
    \end{equation}
    
Finally, \emph{MLShrink} coincides exactly with the deterministic two-threshold keep-discard rule outside the undecided band $R_u$.
\end{proposition}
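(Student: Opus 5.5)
The plan is to rewrite the estimator in \eqref{eq:mlshrink_theory_form} in the product form $\hat d_{j,k}^{\,MLShrink} = d_{j,k}\tilde L_{j,k}$, where $\tilde L_{j,k}\in\{0,1\}$ is the final label from Section~\ref{sec_MLShrink}. This representation holds because the three regions $R_0$, $R_u$, $R_1$ in \eqref{eq:theory_regions} partition the index set: for every $(j,k)$ exactly one of the indicators ${\bf 1}\{|d_{j,k}|\le\lambda_1\}$, ${\bf 1}\{\lambda_1<|d_{j,k}|<\lambda_2\}$, ${\bf 1}\{|d_{j,k}|\ge\lambda_2\}$ equals one. The factor multiplying $d_{j,k}$ in \eqref{eq:mlshrink_theory_form} is therefore $0$ on $R_0$, $\hat L_{j,k}$ on $R_u$ and $1$ on $R_1$, which is exactly $\tilde L_{j,k}$. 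All three claims then follow by a case split on the value of this binary multiplier.

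For nonexpansiveness \eqref{eq:nonexpansive}, note that $|\hat d_{j,k}^{\,MLShrink}| = |d_{j,k}|\,\tilde L_{j,k}$ and $0\le\tilde L_{j,k}\le 1$. The estimated magnitude is therefore either $0$ or $|d_{j,k}|$, and in both cases it is at most $|d_{j,k}|$. This argument does not use any property of the classifier beyond the fact that its output lies in $\{0,1\}$.

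For sign preservation \eqref{eq:sign_preserving}, take the same two cases. If $\tilde L_{j,k}=0$, the estimate is $0$. If $\tilde L_{j,k}=1$, the estimate equals $d_{j,k}$, which has the sign of $d_{j,k}$ (if $d_{j,k}=0$ the estimate is $0$ anyway). Strictly speaking, this holds for any binary label. The role of the sign-invariance hypothesis is to make the statement meaningful beyond this trivial form. If $x_{j,k}$ depends only on $|d_{j,k}|$, $j$ and $\nu_{j,k}$, then flipping the sign of $d_{j,k}$ leaves both the region membership and $\hat L_{j,k}=\mathcal{C}(x_{j,k})$ unchanged. The rule then acts as an odd map, $\hat d \mapsto -\hat d$ under $d\mapsto -d$, so the keep/kill decision is driven by magnitude information alone. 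I will state this oddness explicitly and then deduce \eqref{eq:sign_preserving}.

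For the final claim, restricting to $(j,k)\notin R_u$ kills the second term of \eqref{eq:mlshrink_theory_form}. What remains is $d_{j,k}{\bf 1}\{|d_{j,k}|\ge\lambda_2\}$, which is precisely the deterministic keep-discard rule: keep on $R_1$ and discard on $R_0$. There is no real obstacle in any of these steps. The only point that needs care is correctly interpreting the role of the sign-invariance assumption. The proof should make clear that it guarantees symmetric (odd) behaviour of the rule, rather than being necessary for the bare ``zero or same sign'' dichotomy.
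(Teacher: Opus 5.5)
Your proposal is correct and matches the paper's proof: the paper also case-splits on $R_0$, $R_u$, $R_1$ (equivalently, on the binary multiplier $\tilde L_{j,k}\in\{0,1\}$), and it likewise never uses the sign-invariance hypothesis, because multiplying by $0$ or $1$ already forces the ``zero or same sign'' dichotomy. Your oddness remark is a harmless addition, but the dichotomy should be credited to the binary multiplier, not to oddness, since an odd map such as $d\mapsto -d$ need not preserve signs.
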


\begin{proof}
If $(j,k)\in R_0$, then $\hat d_{j,k}^{\,MLShrink}=0$, so \eqref{eq:nonexpansive} is immediate. If $(j,k)\in R_1$, then $\hat d_{j,k}^{\,MLShrink}=d_{j,k}$, so equality holds in \eqref{eq:nonexpansive}. If $(j,k)\in R_u$, then $\hat d_{j,k}^{\,MLShrink}=d_{j,k}\hat L_{j,k}$ with $\hat L_{j,k}\in\{0,1\}$, hence
$$
\bigl|\hat d_{j,k}^{\,MLShrink}\bigr|
=
|d_{j,k}|\,|\hat L_{j,k}|
\le |d_{j,k}|.
$$

This proves nonexpansiveness. The sign-preserving property follows because multiplication by $\hat L_{j,k}\in\{0,1\}$ either annihilates the coefficient or leaves its sign unchanged. The final statement follows directly from the definition of the estimator.
\end{proof}

Proposition \ref{prop:mlshrink_structure} clarifies that \emph{MLShrink} is a support-selection rule rather than a coefficient-rescaling rule. The only place where it differs from ordinary two-threshold thresholding is the middle band $R_u$.

\subsection{Oracle benchmark and exact risk decomposition}

To measure the quality of classification on the undecided band, it is convenient to introduce an oracle benchmark. For $(j,k)\in R_u$, define the oracle label
\begin{eqnarray}
L^\star_{j,k}
=
\arg\min_{\ell\in\{0,1\}}
\Bigl(\ell d_{j,k}-\theta_{j,k}\Bigr)^2.
\label{eq:oracle_label}
\end{eqnarray}

Equivalently,
\begin{eqnarray}
L^\star_{j,k}
=
\left\{
\begin{array}{ll}
1, & (d_{j,k}-\theta_{j,k})^2 \le \theta_{j,k}^2, \\[1ex]
0, & (d_{j,k}-\theta_{j,k})^2 > \theta_{j,k}^2.
\end{array}
\right.
\label{eq:oracle_label_explicit}
\end{eqnarray}

Thus the oracle retains a middle-band coefficient if keeping it yields smaller local squared error than discarding it.

The associated oracle estimator is
\begin{eqnarray}
\hat d_{j,k}^{\,oracle}
=
d_{j,k}{\bf 1}\{|d_{j,k}| \ge \lambda_2\}
+
d_{j,k}{\bf 1}\{\lambda_1 < |d_{j,k}| < \lambda_2\}L^\star_{j,k}.
\label{eq:oracle_estimator}
\end{eqnarray}

For $(j,k)\in R_u$, define the local oracle loss gap
\begin{eqnarray}
\Delta_{j,k}
=
\Bigl|
\theta_{j,k}^2-(d_{j,k}-\theta_{j,k})^2
\Bigr|
=
\Bigl|
2d_{j,k}\theta_{j,k}-d_{j,k}^2
\Bigr|.
\label{eq:local_gap}
\end{eqnarray}

\begin{theorem}[Exact oracle-risk decomposition]\label{thm:oracle_decomposition}
The total quadratic risk of \emph{MLShrink} admits the exact decomposition
\begin{eqnarray}
E\Bigl\|
\hat d^{\,MLShrink}-\theta
\Bigr\|^2
=
E\Bigl\|
\hat d^{\,oracle}-\theta
\Bigr\|^2
+
\sum_{j,k}
E\Bigl[
{\bf 1}\{(j,k)\in R_u\}
\Delta_{j,k}
{\bf 1}\{\hat L_{j,k}\neq L^\star_{j,k}\}
\Bigr].
\label{eq:risk_decomposition_exact}
\end{eqnarray}

In particular, the excess risk of \emph{MLShrink} relative to the oracle rule is driven entirely by classification errors on the undecided band.
\end{theorem}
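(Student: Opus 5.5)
The identity is pathwise: I will prove it for every realization of $d$, then take expectations, using linearity over the finitely many indices and monotone convergence if the index set is infinite (all terms are nonnegative). Squared error is additive over coordinates, so it suffices to show, for each fixed $(j,k)$,
\begin{eqnarray*}
\bigl(\hat d^{\,MLShrink}_{j,k}-\theta_{j,k}\bigr)^2
=
\bigl(\hat d^{\,oracle}_{j,k}-\theta_{j,k}\bigr)^2
+{\bf 1}\{(j,k)\in R_u\}\,\Delta_{j,k}\,{\bf 1}\{\hat L_{j,k}\neq L^\star_{j,k}\}.
\end{eqnarray*}

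I will split into the three regions. On $R_0$ both estimators equal $0$, and on $R_1$ both equal $d_{j,k}$; this is the last statement of Proposition~\ref{prop:mlshrink_structure} together with the same reading of \eqref{eq:oracle_estimator}. So the two squared errors coincide there, and the indicator of $R_u$ kills the correction term.

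On $R_u$, both estimators have the form $\ell d_{j,k}$ with $\ell\in\{0,1\}$. The loss is therefore $\theta_{j,k}^2$ if $\ell=0$ and $(d_{j,k}-\theta_{j,k})^2$ if $\ell=1$. If $\hat L_{j,k}=L^\star_{j,k}$, the two losses agree and the correction term vanishes. If $\hat L_{j,k}\neq L^\star_{j,k}$, then $\hat L_{j,k}$ selects the other of the two values $\{\theta_{j,k}^2,(d_{j,k}-\theta_{j,k})^2\}$. By \eqref{eq:oracle_label} the oracle attains the minimum, so the difference is $\max-\min=|\theta_{j,k}^2-(d_{j,k}-\theta_{j,k})^2|=\Delta_{j,k}$. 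The algebraic identity $\theta^2-(d-\theta)^2=2d\theta-d^2$ confirms the second form in \eqref{eq:local_gap}.

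The one point needing care is ties in the argmin, where $\Delta_{j,k}=0$. There the convention in \eqref{eq:oracle_label_explicit} is immaterial, because a disagreement contributes zero in either case. No distributional assumption (Gaussianity, independence, or how $\hat L$ is trained or whether it depends on the data) is needed, since the identity is pointwise. The only remaining technicality is measurability and integrability: every term is nonnegative and bounded by $d_{j,k}^2+2\theta_{j,k}^2$, which has finite expectation. With that, summing and taking expectations gives \eqref{eq:risk_decomposition_exact}.

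The final sentence of the theorem follows because the excess term is a sum of nonnegative quantities, each supported on the event $\{(j,k)\in R_u,\ \hat L_{j,k}\neq L^\star_{j,k}\}$.
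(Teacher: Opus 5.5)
Your proof is correct and follows the paper's argument. You establish the coefficientwise identity pathwise: the two estimators coincide outside $R_u$, and on $R_u$ they differ by exactly $\Delta_{j,k}$ when $\hat L_{j,k}\neq L^\star_{j,k}$. You then sum over coefficients and take expectations. One harmless slip: $(d_{j,k}-\theta_{j,k})^2$ is not bounded by $d_{j,k}^2+2\theta_{j,k}^2$ (take $\theta_{j,k}=1$, $d_{j,k}=-1$), but $2d_{j,k}^2+2\theta_{j,k}^2$ does bound every term and gives the same integrability conclusion.
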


\begin{proof}
Outside $R_u$, the \emph{MLShrink} estimator and the oracle estimator coincide coefficientwise. Hence any risk difference can only arise from coefficients in $R_u$.

For a fixed $(j,k)\in R_u$, the \emph{MLShrink} loss is
$$
\Bigl(\hat L_{j,k}d_{j,k}-\theta_{j,k}\Bigr)^2,
$$
while the oracle loss is
$$
\Bigl(L^\star_{j,k}d_{j,k}-\theta_{j,k}\Bigr)^2.
$$

If $\hat L_{j,k}=L^\star_{j,k}$, the difference is zero. If $\hat L_{j,k}\neq L^\star_{j,k}$, then one label yields local loss $\theta_{j,k}^2$ and the other yields local loss $(d_{j,k}-\theta_{j,k})^2$. Since $L^\star_{j,k}$ is defined as the minimizing label, the excess local loss is exactly
$$
\Bigl|
\theta_{j,k}^2-(d_{j,k}-\theta_{j,k})^2
\Bigr|
=
\Delta_{j,k}.
$$

Therefore, coefficientwise,
 $$
\Bigl(\hat d_{j,k}^{\,MLShrink}-\theta_{j,k}\Bigr)^2
=
\Bigl(\hat d_{j,k}^{\,oracle}-\theta_{j,k}\Bigr)^2
+
{\bf 1}\{(j,k)\in R_u\}
\Delta_{j,k}
{\bf 1}\{\hat L_{j,k}\neq L^\star_{j,k}\}.
$$ 

Summing over $(j,k)$ and taking expectations gives \eqref{eq:risk_decomposition_exact}.
\end{proof}

\begin{corollary}[A probability bound for excess oracle risk]\label{cor:oracle_bound}
Suppose that on the undecided band one has
\begin{eqnarray}
|\theta_{j,k}| \le M
\qquad
\mbox{for all } (j,k)\in R_u.
\label{eq:theta_bounded}
\end{eqnarray}

Then
\begin{eqnarray}
E\Bigl\|
\hat d^{\,MLShrink}-\theta
\Bigr\|^2
-
E\Bigl\|
\hat d^{\,oracle}-\theta
\Bigr\|^2
\le
\lambda_2(2M+\lambda_2)
\sum_{j,k}
P\Bigl(
\hat L_{j,k}\neq L^\star_{j,k},\,
(j,k)\in R_u
\Bigr).
\label{eq:risk_bound_probability}
\end{eqnarray}
\end{corollary}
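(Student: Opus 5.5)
The plan is to start from the exact decomposition in Theorem~\ref{thm:oracle_decomposition}. The excess risk on the left of \eqref{eq:risk_bound_probability} equals
$$
\sum_{j,k} E\Bigl[{\bf 1}\{(j,k)\in R_u\}\,\Delta_{j,k}\,{\bf 1}\{\hat L_{j,k}\neq L^\star_{j,k}\}\Bigr].
$$
So it suffices to give a deterministic bound on $\Delta_{j,k}$ that holds on the event $\{(j,k)\in R_u\}$. Once $\Delta_{j,k}\le \lambda_2(2M+\lambda_2)$ is known on that event, we pull the constant out of each expectation. The remaining expectation ${\bf 1}\{(j,k)\in R_u\}{\bf 1}\{\hat L_{j,k}\neq L^\star_{j,k}\}$ is exactly the probability appearing in \eqref{eq:risk_bound_probability}. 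Summing over $(j,k)$ then gives the claim.

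The key step is the pointwise bound on the local gap. Using the second form in \eqref{eq:local_gap}, write $\Delta_{j,k}=|d_{j,k}|\,|2\theta_{j,k}-d_{j,k}|$. On $R_u$ we have $|d_{j,k}|<\lambda_2$. The triangle inequality together with \eqref{eq:theta_bounded} gives $|2\theta_{j,k}-d_{j,k}|\le 2|\theta_{j,k}|+|d_{j,k}|\le 2M+\lambda_2$. Multiplying the two bounds yields $\Delta_{j,k}\le \lambda_2(2M+\lambda_2)$, with strict inequality in fact, which is harmless here.

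There is little real obstacle. The one point needing care is interpreting hypothesis \eqref{eq:theta_bounded}, since $R_u$ is a random set. I would read it as the bound $|\theta_{j,k}|\le M$ holding whenever $(j,k)\in R_u$, or more simply for all $(j,k)$, which is the natural reading because $\theta$ is deterministic. Under either reading the inequality holds pointwise on the event inside each expectation, so no integrability or independence issue arises. Finally, measurability of the event $\{\hat L_{j,k}\neq L^\star_{j,k}\}$ is implicit in Theorem~\ref{thm:oracle_decomposition}, so nothing further is needed.
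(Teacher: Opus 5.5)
Your proposal is correct and follows essentially the same route as the paper: invoke the exact decomposition of Theorem~\ref{thm:oracle_decomposition}, bound $\Delta_{j,k}\le\lambda_2(2M+\lambda_2)$ on $R_u$ using $|d_{j,k}|<\lambda_2$ and $|\theta_{j,k}|\le M$, and pull the constant out of each expectation. The paper bounds $|2d_{j,k}\theta_{j,k}-d_{j,k}^2|\le 2|d_{j,k}||\theta_{j,k}|+d_{j,k}^2$ instead of factoring out $|d_{j,k}|$, which is an immaterial difference, and your remark on how to read the hypothesis over the random set $R_u$ is a sensible clarification that the paper leaves implicit.
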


\begin{proof}
If $(j,k)\in R_u$, then $|d_{j,k}|<\lambda_2$. Hence, by \eqref{eq:local_gap},
\begin{eqnarray}
\Delta_{j,k}
&=&
|2d_{j,k}\theta_{j,k}-d_{j,k}^2|
\nonumber\\
&\le&
2|d_{j,k}||\theta_{j,k}|+|d_{j,k}|^2
\nonumber\\
&\le&
2\lambda_2 M+\lambda_2^2
\nonumber\\
&=&
\lambda_2(2M+\lambda_2).
\label{eq:gap_bound}
\end{eqnarray}

Substituting this bound into \eqref{eq:risk_decomposition_exact} yields \eqref{eq:risk_bound_probability}.
\end{proof}

\subsection{Comparison with hard thresholding}

The oracle-risk decomposition compares \emph{MLShrink} with the best possible keep-or-discard rule on the undecided band. It is also useful to compare \emph{MLShrink} directly with a standard deterministic benchmark, namely hard thresholding at level $\lambda_2$.

\begin{theorem}[Exact comparison with hard thresholding on the undecided band]
\label{thm:hard_exact}
Define the hard-threshold estimator at level $\lambda_2$ by
\begin{eqnarray}
\hat d_{j,k}^{\,hard}
=
d_{j,k}{\bf 1}\{|d_{j,k}|\ge \lambda_2\}.
\label{eq:hard_estimator}
\end{eqnarray}

Then
\begin{eqnarray}
&&
E\Bigl\|
\hat d^{\,MLShrink}-\theta
\Bigr\|^2
-
E\Bigl\|
\hat d^{\,hard}-\theta
\Bigr\|^2
\nonumber\\
&=&
-\sum_{j,k}
E\Bigl[
{\bf 1}\{(j,k)\in R_u\}
\Delta_{j,k}
{\bf 1}\{L^\star_{j,k}=1\}
\hat L_{j,k}
\Bigr]
\nonumber\\
&&
+
\sum_{j,k}
E\Bigl[
{\bf 1}\{(j,k)\in R_u\}
\Delta_{j,k}
{\bf 1}\{L^\star_{j,k}=0\}
\hat L_{j,k}
\Bigr].
\label{eq:hard_exact_decomp}
\end{eqnarray}

In particular, \emph{MLShrink} improves on hard thresholding precisely through coefficients in the undecided band for which retention is preferable and the classifier predicts $\hat L_{j,k}=1$.
\end{theorem}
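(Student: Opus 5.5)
The plan is to argue coefficientwise, exactly as in the proof of Theorem~\ref{thm:oracle_decomposition}, and then sum and take expectations. The first step is to observe that outside the undecided band the two estimators agree. On $R_0$ both $\hat d^{\,MLShrink}_{j,k}$ and $\hat d^{\,hard}_{j,k}$ vanish, and on $R_1$ both equal $d_{j,k}$, by \eqref{eq:mlshrink_theory_form} and \eqref{eq:hard_estimator} (this is also the last statement of Proposition~\ref{prop:mlshrink_structure}). So the loss difference is zero there, and only $(j,k)\in R_u$ contributes.

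Next, fix $(j,k)\in R_u$. Hard thresholding sets $\hat d^{\,hard}_{j,k}=0$, with loss $\theta_{j,k}^2$. MLShrink has loss $(\hat L_{j,k}d_{j,k}-\theta_{j,k})^2$. Since $\hat L_{j,k}\in\{0,1\}$, I would write the MLShrink loss as $\theta_{j,k}^2+\hat L_{j,k}\bigl[(d_{j,k}-\theta_{j,k})^2-\theta_{j,k}^2\bigr]$. The local loss difference is then
\[
\hat L_{j,k}\bigl[(d_{j,k}-\theta_{j,k})^2-\theta_{j,k}^2\bigr]
=\hat L_{j,k}\bigl(d_{j,k}^2-2d_{j,k}\theta_{j,k}\bigr).
\]

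The key step is to express this signed quantity through $\Delta_{j,k}$ in \eqref{eq:local_gap} and the oracle label \eqref{eq:oracle_label_explicit}. Split into two cases according to $L^\star_{j,k}$:
\begin{itemize}
\item If $L^\star_{j,k}=1$, then $(d_{j,k}-\theta_{j,k})^2\le\theta_{j,k}^2$. The bracket is nonpositive and equals $-\Delta_{j,k}$.
\item If $L^\star_{j,k}=0$, the bracket is positive and equals $+\Delta_{j,k}$.
\end{itemize}
Hence on $R_u$ the bracket equals $\Delta_{j,k}\bigl({\bf 1}\{L^\star_{j,k}=0\}-{\bf 1}\{L^\star_{j,k}=1\}\bigr)$, since the two indicators are complementary. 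Multiplying by $\hat L_{j,k}$ and by ${\bf 1}\{(j,k)\in R_u\}$ gives the coefficientwise identity
\[
(\hat d^{\,MLShrink}_{j,k}-\theta_{j,k})^2-(\hat d^{\,hard}_{j,k}-\theta_{j,k})^2
={\bf 1}\{(j,k)\in R_u\}\Delta_{j,k}\hat L_{j,k}\bigl({\bf 1}\{L^\star_{j,k}=0\}-{\bf 1}\{L^\star_{j,k}=1\}\bigr).
\]
This identity holds for every realization.

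Finally, sum over $(j,k)$ and take expectations, using linearity, to obtain \eqref{eq:hard_exact_decomp}. The one point needing care, and the main (mild) obstacle, is justifying the interchange of sum and expectation. If the index set is finite, as for a DWT of length $n$, this is trivial. Otherwise, each term is bounded by $|d_{j,k}|\,(|d_{j,k}|+2|\theta_{j,k}|)$, and the risks are assumed finite, so Fubini applies.

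The interpretive remark then follows by reading off signs. The first sum is nonpositive and collects exactly those band coefficients with $L^\star_{j,k}=1$ and $\hat L_{j,k}=1$. The second sum is nonnegative and collects false retentions.
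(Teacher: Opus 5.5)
Your proposal is correct and follows essentially the same route as the paper: the two estimators agree coefficientwise off $R_u$, and on $R_u$ you use the identity $(\hat L_{j,k}d_{j,k}-\theta_{j,k})^2-\theta_{j,k}^2=\hat L_{j,k}\bigl((d_{j,k}-\theta_{j,k})^2-\theta_{j,k}^2\bigr)$, split by $L^\star_{j,k}$ to get $\mp\Delta_{j,k}$, and then sum and take expectations. Your remark on interchanging sum and expectation goes beyond the paper, which implicitly works with a finite index set; the cleanest justification is that the absolute local difference is bounded by the sum of the two local losses, so finiteness of both risks suffices.
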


\begin{proof}
Outside $R_u$, the estimators $\hat d^{\,MLShrink}$ and $\hat d^{\,hard}$ coincide coefficientwise. Indeed, both estimators set coefficients in $R_0$ to zero and both retain coefficients in $R_1$. Hence any risk difference arises only from coefficients in $R_u$.

Fix $(j,k)\in R_u$. On this band,
\begin{equation}
    \begin{split}
   \hat d_{j,k}^{\,hard}=0,
\qquad
\hat d_{j,k}^{\,MLShrink}=\hat L_{j,k}d_{j,k},
\qquad
\hat L_{j,k}\in\{0,1\}.
\label{eq:hard_local_forms}
       \end{split}
\end{equation}

Therefore the local risk difference is
\begin{eqnarray}
\Bigl(\hat d_{j,k}^{\,MLShrink}-\theta_{j,k}\Bigr)^2
-
\Bigl(\hat d_{j,k}^{\,hard}-\theta_{j,k}\Bigr)^2
&=&
\Bigl(\hat L_{j,k}d_{j,k}-\theta_{j,k}\Bigr)^2-\theta_{j,k}^2.
\label{eq:local_hard_diff_start}
\end{eqnarray}

Since $\hat L_{j,k}\in\{0,1\}$, this becomes
\begin{eqnarray}
\Bigl(\hat L_{j,k}d_{j,k}-\theta_{j,k}\Bigr)^2-\theta_{j,k}^2
&=&
\hat L_{j,k}\Bigl((d_{j,k}-\theta_{j,k})^2-\theta_{j,k}^2\Bigr).
\label{eq:local_hard_diff}
\end{eqnarray}

If $L^\star_{j,k}=1$, then $(d_{j,k}-\theta_{j,k})^2\le \theta_{j,k}^2$, so
\begin{eqnarray}
(d_{j,k}-\theta_{j,k})^2-\theta_{j,k}^2
=
-\Delta_{j,k}.
\label{eq:local_hard_keep}
\end{eqnarray}

If $L^\star_{j,k}=0$, then $(d_{j,k}-\theta_{j,k})^2>\theta_{j,k}^2$, so
\begin{eqnarray}
(d_{j,k}-\theta_{j,k})^2-\theta_{j,k}^2
=
+\Delta_{j,k}.
\label{eq:local_hard_discard}
\end{eqnarray}

Substituting \eqref{eq:local_hard_keep} and \eqref{eq:local_hard_discard} into \eqref{eq:local_hard_diff} yields
\begin{eqnarray}
&&
\Bigl(\hat d_{j,k}^{\,MLShrink}-\theta_{j,k}\Bigr)^2
-
\Bigl(\hat d_{j,k}^{\,hard}-\theta_{j,k}\Bigr)^2
\nonumber\\
&=&
-\Delta_{j,k}{\bf 1}\{L^\star_{j,k}=1\}\hat L_{j,k}
+
\Delta_{j,k}{\bf 1}\{L^\star_{j,k}=0\}\hat L_{j,k}.
\end{eqnarray}

Multiplying by ${\bf 1}\{(j,k)\in R_u\}$, summing over $(j,k)$, and taking expectations gives \eqref{eq:hard_exact_decomp}.
\end{proof}

\begin{corollary}[A sufficient condition for improvement over hard thresholding]
\label{cor:hard_improvement}

Let
\begin{eqnarray}
S_u = \{(j,k)\in R_u : L^\star_{j,k}=1\}
\label{eq:Su_definition}
\end{eqnarray}

denote the recoverable part of the undecided band. Assume that there exist deterministic constants $\underline{\Delta}_{j,k}>0$ for $(j,k)\in S_u$ and $\overline{\Delta}_{j,k}>0$ for $(j,k)\in R_u\setminus S_u$ such that
\begin{eqnarray}
E\Bigl(\Delta_{j,k}\mid \hat L_{j,k}=1\Bigr) \ge \underline{\Delta}_{j,k},
\qquad (j,k)\in S_u,
\label{eq:delta_lower}
\end{eqnarray}

and
\begin{eqnarray}
E\Bigl(\Delta_{j,k}\mid \hat L_{j,k}=1\Bigr) \le \overline{\Delta}_{j,k},
\qquad (j,k)\in R_u\setminus S_u.
\label{eq:delta_upper}
\end{eqnarray}

Assume also that
\begin{eqnarray}
P(\hat L_{j,k}=1)\ge 1-\eta_1,
\qquad (j,k)\in S_u,
\label{eq:true_keep_prob}
\end{eqnarray}

and
\begin{eqnarray}
P(\hat L_{j,k}=1)\le \eta_0,
\qquad (j,k)\in R_u\setminus S_u.
\label{eq:false_keep_prob}
\end{eqnarray}

Then
\begin{eqnarray}
&&
E\Bigl\|
\hat d^{\,MLShrink}-\theta
\Bigr\|^2
-
E\Bigl\|
\hat d^{\,hard}-\theta
\Bigr\|^2
\nonumber\\
&\le&
-\sum_{(j,k)\in S_u}\underline{\Delta}_{j,k}(1-\eta_1)
+
\sum_{(j,k)\in R_u\setminus S_u}\overline{\Delta}_{j,k}\eta_0.
\label{eq:hard_improvement_bound}
\end{eqnarray}

In particular, if
\begin{eqnarray}
\sum_{(j,k)\in S_u}\underline{\Delta}_{j,k}(1-\eta_1)
>
\sum_{(j,k)\in R_u\setminus S_u}\overline{\Delta}_{j,k}\eta_0,
\label{eq:hard_improvement_condition}
\end{eqnarray}

then
\begin{eqnarray}
E\Bigl\|
\hat d^{\,MLShrink}-\theta
\Bigr\|^2
<
E\Bigl\|
\hat d^{\,hard}-\theta
\Bigr\|^2.
\label{eq:hard_improvement_conclusion}
\end{eqnarray}
\end{corollary}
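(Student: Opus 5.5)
The plan is to start from the exact identity of Theorem~\ref{thm:hard_exact} and bound its two sums term by term. For each $(j,k)$, the first sum involves $E[{\bf 1}\{(j,k)\in R_u\}\Delta_{j,k}{\bf 1}\{L^\star_{j,k}=1\}\hat L_{j,k}]$. In the corollary, membership in $S_u$ and $R_u\setminus S_u$ is treated as given for each index, in the same deterministic-index spirit as the hypotheses (\ref{eq:delta_lower})--(\ref{eq:false_keep_prob}). So for $(j,k)\in S_u$ the two indicators equal one, and the summand reduces to $E[\Delta_{j,k}\hat L_{j,k}]$. For $(j,k)\in R_u\setminus S_u$ the first-sum indicator vanishes and the second-sum summand reduces to $E[\Delta_{j,k}\hat L_{j,k}]$. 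Indices outside $R_u$ contribute nothing to either sum. The difference of risks therefore equals
$$
-\sum_{(j,k)\in S_u}E[\Delta_{j,k}\hat L_{j,k}]+\sum_{(j,k)\in R_u\setminus S_u}E[\Delta_{j,k}\hat L_{j,k}].
$$

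The key step is to rewrite each expectation by conditioning on the binary label. Since $\hat L_{j,k}\in\{0,1\}$, we have
$$
E[\Delta_{j,k}\hat L_{j,k}]=E(\Delta_{j,k}\mid \hat L_{j,k}=1)\,P(\hat L_{j,k}=1).
$$
If $P(\hat L_{j,k}=1)=0$, the term is simply zero, and the bounds below still hold trivially with the stated constants. Both factors are nonnegative because $\Delta_{j,k}\ge 0$.

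On $S_u$, we apply (\ref{eq:delta_lower}) and (\ref{eq:true_keep_prob}) to get $E[\Delta_{j,k}\hat L_{j,k}]\ge \underline{\Delta}_{j,k}(1-\eta_1)$. With the minus sign in front, this gives an upper bound of $-\underline{\Delta}_{j,k}(1-\eta_1)$. On $R_u\setminus S_u$, we apply (\ref{eq:delta_upper}) and (\ref{eq:false_keep_prob}) to get $E[\Delta_{j,k}\hat L_{j,k}]\le\overline{\Delta}_{j,k}\eta_0$. The degenerate case $P(\hat L_{j,k}=1)=0$ on $S_u$ cannot occur if $\eta_1<1$, and if $\eta_1\ge 1$ the lower bound is nonpositive and holds trivially. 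Summing these bounds gives (\ref{eq:hard_improvement_bound}). Under (\ref{eq:hard_improvement_condition}) the right-hand side is strictly negative, which yields (\ref{eq:hard_improvement_conclusion}).

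The main obstacle is interpretive rather than computational. The sets $R_u$ and $S_u$ are random, because they depend on $d_{j,k}$, yet the hypotheses are phrased for fixed indices. I would make the reading explicit: the sums in the conclusion are over indices whose band membership is treated as fixed, or equivalently all expectations and probabilities are taken conditionally on the event $\{(j,k)\in S_u\}$, respectively $\{(j,k)\in R_u\setminus S_u\}$. Under this convention the indicators in Theorem~\ref{thm:hard_exact} factor out exactly, and the argument above goes through verbatim. In the conditional version, the same conditioning identity is applied under the conditional measure.
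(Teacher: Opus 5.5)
Your proposal is correct and follows the paper's proof essentially step for step: use Theorem~\ref{thm:hard_exact} to reduce the risk difference to $-\sum_{S_u}E[\Delta_{j,k}\hat L_{j,k}]+\sum_{R_u\setminus S_u}E[\Delta_{j,k}\hat L_{j,k}]$, factor each term as $E(\Delta_{j,k}\mid \hat L_{j,k}=1)\,P(\hat L_{j,k}=1)$, apply the four hypotheses termwise, and sum. Your fixed-index reading of the random sets $R_u$ and $S_u$ is the convention the paper uses implicitly, but the conditional-on-membership version is not literally equivalent: the unconditional risk difference would then carry the weights $P((j,k)\in S_u)$ and $P((j,k)\in R_u\setminus S_u)$.
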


\begin{proof}
By Theorem~\ref{thm:hard_exact},
\begin{eqnarray}
&&
E\Bigl\|
\hat d^{\,MLShrink}-\theta
\Bigr\|^2
-
E\Bigl\|
\hat d^{\,hard}-\theta
\Bigr\|^2
\nonumber\\
&=&
-\sum_{(j,k)\in S_u}
E\Bigl[\Delta_{j,k}\hat L_{j,k}\Bigr]
+
\sum_{(j,k)\in R_u\setminus S_u}
E\Bigl[\Delta_{j,k}\hat L_{j,k}\Bigr].
\label{eq:hard_cor_start}
\end{eqnarray}

For $(j,k)\in S_u$,
\begin{eqnarray}
E\Bigl[\Delta_{j,k}\hat L_{j,k}\Bigr]
&=&
E\Bigl[\Delta_{j,k}\mid \hat L_{j,k}=1\Bigr]P(\hat L_{j,k}=1)
\nonumber\\
&\ge&
\underline{\Delta}_{j,k}(1-\eta_1),
\end{eqnarray}

by \eqref{eq:delta_lower} and \eqref{eq:true_keep_prob}. Similarly, for $(j,k)\in R_u\setminus S_u$,
\begin{eqnarray}
E\Bigl[\Delta_{j,k}\hat L_{j,k}\Bigr]
&=&
E\Bigl[\Delta_{j,k}\mid \hat L_{j,k}=1\Bigr]P(\hat L_{j,k}=1)
\nonumber\\
&\le&
\overline{\Delta}_{j,k}\eta_0,
\end{eqnarray}
by \eqref{eq:delta_upper} and \eqref{eq:false_keep_prob}. Substituting these bounds into \eqref{eq:hard_cor_start} proves \eqref{eq:hard_improvement_bound}, and \eqref{eq:hard_improvement_conclusion} follows immediately from \eqref{eq:hard_improvement_condition}.
\end{proof}
\subsection{Oracle consistency}

Theorem~\ref{thm:oracle_decomposition} suggests the appropriate asymptotic target. Since the only novel part of \emph{MLShrink} is the classifier on $R_u$, the natural notion of asymptotic optimality is oracle consistency relative to the same two-threshold architecture.

\begin{theorem}[Oracle consistency of \emph{MLShrink}]
\label{thm:oracle_consistency}
Consider a sequence of problems indexed by $n$. Let
\begin{eqnarray}
p_n
=
\sup_{j,k}
P\Bigl(
\hat L_{j,k}^{(n)} \neq L_{j,k}^{\star,(n)}
\,\Big|\,
(j,k)\in R_u^{(n)}
\Bigr).
\label{eq:pn_def}
\end{eqnarray}

Assume that there exist deterministic sequences $M_n$ and $a_n>0$ such that
\begin{eqnarray}
|\theta_{j,k}^{(n)}| \le M_n
\qquad
\mbox{for all } (j,k)\in R_u^{(n)},
\label{eq:Mn_bound}
\end{eqnarray}
\begin{eqnarray}
\lambda_{2,n}(2M_n+\lambda_{2,n})
\sum_{j,k}
P\Bigl(
(j,k)\in R_u^{(n)}
\Bigr)
=
O(a_n),
\label{eq:an_condition}
\end{eqnarray}

and
\begin{eqnarray}
p_n \longrightarrow 0
\qquad\mbox{as}\qquad
n\to\infty.
\label{eq:classifier_consistency}
\end{eqnarray}

Then
\begin{eqnarray}
\frac{
E\Bigl\|
\hat d_n^{\,MLShrink}-\theta_n
\Bigr\|^2
-
E\Bigl\|
\hat d_n^{\,oracle}-\theta_n
\Bigr\|^2
}{a_n}
\longrightarrow 0.
\label{eq:oracle_consistency}
\end{eqnarray}
\end{theorem}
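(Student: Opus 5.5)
The plan is to combine the exact decomposition of Theorem~\ref{thm:oracle_decomposition} with the pointwise gap bound from the proof of Corollary~\ref{cor:oracle_bound}, and then to factor the misclassification probability through the conditional probabilities that define $p_n$. Fix $n$. By Theorem~\ref{thm:oracle_decomposition}, the excess risk equals
\begin{eqnarray*}
\sum_{j,k} E\Bigl[{\bf 1}\{(j,k)\in R_u^{(n)}\}\,\Delta_{j,k}\,{\bf 1}\{\hat L^{(n)}_{j,k}\neq L^{\star,(n)}_{j,k}\}\Bigr].
\end{eqnarray*}
This quantity is nonnegative, so it suffices to prove an upper bound of the form $o(a_n)$.

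Next, assumption \eqref{eq:Mn_bound} makes the bound \eqref{eq:gap_bound} available with $M$ replaced by $M_n$. Hence on $R_u^{(n)}$ we have $\Delta_{j,k}\le \lambda_{2,n}(2M_n+\lambda_{2,n})$ deterministically. Since this bound is a constant in $(j,k)$ and in the randomness, Corollary~\ref{cor:oracle_bound} gives
\begin{eqnarray*}
0\le \mbox{excess risk} \le \lambda_{2,n}(2M_n+\lambda_{2,n})\sum_{j,k} P\Bigl(\hat L^{(n)}_{j,k}\neq L^{\star,(n)}_{j,k},\,(j,k)\in R_u^{(n)}\Bigr).
\end{eqnarray*}

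The key step is to write each joint probability as a conditional probability times a marginal one:
\begin{eqnarray*}
P\bigl(\hat L^{(n)}_{j,k}\neq L^{\star,(n)}_{j,k}\,\big|\,(j,k)\in R_u^{(n)}\bigr)\,P\bigl((j,k)\in R_u^{(n)}\bigr)\le p_n\,P\bigl((j,k)\in R_u^{(n)}\bigr).
\end{eqnarray*}
Indices with $P((j,k)\in R_u^{(n)})=0$ contribute zero, so the conditioning is well defined wherever it matters. Summing over $(j,k)$ and pulling out $p_n$ gives
\begin{eqnarray*}
\mbox{excess risk}\le p_n\,\lambda_{2,n}(2M_n+\lambda_{2,n})\sum_{j,k}P\bigl((j,k)\in R_u^{(n)}\bigr)\le C\,p_n\,a_n
\end{eqnarray*}
for all large $n$, by \eqref{eq:an_condition}. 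Dividing by $a_n>0$ and using \eqref{eq:classifier_consistency} yields the claim \eqref{eq:oracle_consistency}.

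The main obstacle is rigor, not algebra. Assumption \eqref{eq:Mn_bound} is stated on the random set $R_u^{(n)}$, so it must be read as holding almost surely on the event $\{(j,k)\in R_u^{(n)}\}$. This is exactly the form used inside the expectation, so I will state that interpretation explicitly. I will also check that the supremum defining $p_n$ is taken only over indices with positive probability of lying in the band.
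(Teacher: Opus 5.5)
Your proposal is correct and follows the paper's proof essentially step for step. Like the paper, you invoke Corollary~\ref{cor:oracle_bound} with $M=M_n$, write each joint probability as a conditional times a marginal, bound the conditional by $p_n$, and then use \eqref{eq:an_condition} before dividing by $a_n$. Your explicit observation that the excess risk is nonnegative by Theorem~\ref{thm:oracle_decomposition} is a worthwhile addition: the paper's proof supplies only the upper bound and tacitly needs this lower bound of zero to conclude that the ratio converges to zero.
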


\begin{proof}
By Corollary~\ref{cor:oracle_bound},
\begin{eqnarray}
&&
E\Bigl\|
\hat d_n^{\,MLShrink}-\theta_n
\Bigr\|^2
-
E\Bigl\|
\hat d_n^{\,oracle}-\theta_n
\Bigr\|^2
\nonumber\\
&\le&
\lambda_{2,n}(2M_n+\lambda_{2,n})
\sum_{j,k}
P\Bigl(
\hat L_{j,k}^{(n)} \neq L_{j,k}^{\star,(n)},\,
(j,k)\in R_u^{(n)}
\Bigr)
\nonumber\\
&\le&
\lambda_{2,n}(2M_n+\lambda_{2,n})
\sum_{j,k}
P\Bigl(
\hat L_{j,k}^{(n)} \neq L_{j,k}^{\star,(n)}
\,\Big|\,
(j,k)\in R_u^{(n)}
\Bigr)
P\Bigl(
(j,k)\in R_u^{(n)}
\Bigr)
\nonumber\\
&\le&
p_n\,
\lambda_{2,n}(2M_n+\lambda_{2,n})
\sum_{j,k}
P\Bigl(
(j,k)\in R_u^{(n)}
\Bigr).
\label{eq:oracle_consistency_proof}
\end{eqnarray}

By \eqref{eq:an_condition}, the factor multiplying $p_n$ is $O(a_n)$, and by \eqref{eq:classifier_consistency}, $p_n\to 0$. Dividing both sides of \eqref{eq:oracle_consistency_proof} by $a_n$ proves \eqref{eq:oracle_consistency}.
\end{proof}

Theorem~\ref{thm:oracle_consistency} should be interpreted modestly. It does not claim global minimax optimality. Rather, it states that if the classifier becomes accurate on the undecided band, then \emph{MLShrink} approaches the best possible keep-or-discard rule within the same two-threshold framework.

\subsection{Bayesian interpretation under a contamination prior}

A complementary interpretation of \emph{MLShrink} is obtained by placing a contamination prior on the wavelet coefficients and studying the Bayes keep-or-discard rule on the undecided band. This viewpoint provides a formal justification for contextual features and clarifies why learning on the intermediate band can improve upon deterministic thresholding.

For each coefficient $(j,k)$, let $H_{j,k}\in\{0,1\}$ denote the latent indicator of signal presence, where $H_{j,k}=0$ corresponds to a noise-only coefficient and $H_{j,k}=1$ corresponds to a signal-bearing coefficient. Conditional on a contextual feature vector $u_{j,k}$, assume
\begin{eqnarray}
P(H_{j,k}=1\mid u_{j,k}) = \pi(u_{j,k}),
\label{eq:prior_inclusion}
\end{eqnarray}

and
\begin{eqnarray}
\theta_{j,k}\mid H_{j,k}=0, u_{j,k} &=& 0,
\nonumber\\
\theta_{j,k}\mid H_{j,k}=1, u_{j,k} &\sim& G_{j,k}(\cdot\mid u_{j,k}),
\label{eq:contamination_prior}
\end{eqnarray}

where $G_{j,k}$ is a slab distribution, possibly depending on scale and context. The observed coefficient satisfies
\begin{eqnarray}
d_{j,k} = \theta_{j,k} + \sigma z_{j,k},
\qquad
z_{j,k}\sim N(0,1).
\label{eq:obs_contamination}
\end{eqnarray}

Under squared-error loss, consider the two available actions on the undecided band: retain the coefficient, yielding estimator $d_{j,k}$, or discard it, yielding estimator $0$. Thus the action is indexed by $a\in\{0,1\}$ and the corresponding estimator is $a\,d_{j,k}$.

\begin{theorem}[Bayes rule under a contamination prior]
\label{thm:bayes_contamination}
Under the contamination prior \eqref{eq:prior_inclusion}--\eqref{eq:obs_contamination}, the posterior risk of action $a\in\{0,1\}$ given $(d_{j,k},u_{j,k})$ is
\begin{eqnarray}
R(a\mid d_{j,k},u_{j,k})
=
E\Bigl[
\bigl(a\,d_{j,k}-\theta_{j,k}\bigr)^2
\mid d_{j,k},u_{j,k}
\Bigr].
\label{eq:posterior_risk_action}
\end{eqnarray}

Let
\begin{eqnarray}
m(d_{j,k},u_{j,k}) = E(\theta_{j,k}\mid d_{j,k},u_{j,k})
\label{eq:posterior_mean}
\end{eqnarray}

denote the posterior mean. Then the Bayes rule retains the coefficient if and only if
\begin{eqnarray}
2\,d_{j,k}\,m(d_{j,k},u_{j,k}) \ge d_{j,k}^2.
\label{eq:bayes_keep_rule}
\end{eqnarray}

Equivalently, if the slab is symmetric and the posterior mean has the same sign as $d_{j,k}$, then the Bayes rule retains the coefficient if and only if
\begin{eqnarray}
|m(d_{j,k},u_{j,k})| \ge \frac{|d_{j,k}|}{2}.
\label{eq:bayes_keep_rule_abs}
\end{eqnarray}
\end{theorem}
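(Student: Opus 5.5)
The approach is to compute the two posterior risks explicitly, using only that $d_{j,k}$ is known given $(d_{j,k},u_{j,k})$. Expand the square in \eqref{eq:posterior_risk_action} and pull $a\,d_{j,k}$ outside the conditional expectation:
\begin{eqnarray*}
R(a\mid d_{j,k},u_{j,k}) = a^2 d_{j,k}^2 - 2a\,d_{j,k}\,m(d_{j,k},u_{j,k}) + E(\theta_{j,k}^2\mid d_{j,k},u_{j,k}).
\end{eqnarray*}
This requires $E(\theta_{j,k}^2\mid d_{j,k},u_{j,k})<\infty$. I would either impose a finite second moment on the slab $G_{j,k}$, or note that under Gaussian noise the posterior has Gaussian-type tails, so the moment is finite whenever the marginal density is positive. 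The posterior second moment is the same for both actions, so it cancels from the comparison.

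Next, specialize to $a\in\{0,1\}$, for which $a^2=a$. The two risks are $R(0\mid\cdot)=E(\theta_{j,k}^2\mid\cdot)$ and $R(1\mid\cdot)=d_{j,k}^2-2d_{j,k}m+E(\theta_{j,k}^2\mid\cdot)$. Their difference is
\begin{eqnarray*}
R(1\mid\cdot)-R(0\mid\cdot)=d_{j,k}^2-2d_{j,k}\,m(d_{j,k},u_{j,k}).
\end{eqnarray*}
The Bayes rule chooses the action with smaller posterior risk. Breaking ties in favor of retention, consistent with the convention in \eqref{eq:oracle_label_explicit}, it keeps the coefficient iff $2d_{j,k}m\ge d_{j,k}^2$, which is \eqref{eq:bayes_keep_rule}. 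This is the Bayesian analogue of the oracle computation behind \eqref{eq:local_gap}, with $\theta_{j,k}$ replaced by its posterior mean.

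For \eqref{eq:bayes_keep_rule_abs}, I would use the assumption that $m$ has the same sign as $d_{j,k}$, so that $d_{j,k}m=|d_{j,k}|\,|m|$. On the undecided band $|d_{j,k}|>\lambda_1>0$, so we may divide by $|d_{j,k}|$. Then \eqref{eq:bayes_keep_rule} becomes $2|d_{j,k}||m|\ge|d_{j,k}|^2$, i.e. $|m|\ge |d_{j,k}|/2$. The symmetry of the slab is what justifies the sign condition: with a symmetric (e.g. unimodal) slab and Gaussian likelihood, the posterior is tilted toward $d_{j,k}$. I would state the sign agreement as part of the hypothesis, as the theorem does, rather than prove it in general.

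The main obstacle is not algebraic. It is bookkeeping: the finiteness of the posterior second moment, the tie-breaking convention, and the case $d_{j,k}=0$, which the undecided band excludes. I would handle these with a brief remark rather than a full argument.
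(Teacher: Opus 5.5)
Your proposal is correct and follows the paper's proof essentially step for step. You expand the conditional square, note that $E(\theta_{j,k}^2\mid d_{j,k},u_{j,k})$ is common to both actions, reduce the decision to the sign of $d_{j,k}^2-2d_{j,k}\,m(d_{j,k},u_{j,k})$ with ties going to retention, and then use the sign-agreement hypothesis to obtain \eqref{eq:bayes_keep_rule_abs}. Your bookkeeping remarks are sound and slightly sharper than the paper. The posterior second moment is finite for every slab, since $\theta^2$ times the Gaussian likelihood is bounded in $\theta$. Symmetry of the slab alone, with no unimodality needed, already forces the posterior mean to share the sign of $d_{j,k}$.
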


\begin{proof}
For $a=1$,
\begin{eqnarray}
R(1\mid d_{j,k},u_{j,k})
&=&
E\Bigl[
(d_{j,k}-\theta_{j,k})^2
\mid d_{j,k},u_{j,k}
\Bigr]
\nonumber\\
&=&
d_{j,k}^2
-
2d_{j,k}E(\theta_{j,k}\mid d_{j,k},u_{j,k})
+
E(\theta_{j,k}^2\mid d_{j,k},u_{j,k}).
\label{eq:r1}
\end{eqnarray}

For $a=0$,
\begin{eqnarray}
R(0\mid d_{j,k},u_{j,k})
=
E(\theta_{j,k}^2\mid d_{j,k},u_{j,k}).
\label{eq:r0}
\end{eqnarray}

Therefore,
\begin{equation}
    \begin{split}
R(1\mid d_{j,k},u_{j,k})-R(0\mid d_{j,k},u_{j,k})
=
d_{j,k}^2 - 2d_{j,k}m(d_{j,k},u_{j,k}).
\label{eq:risk_difference_bayes}
 \end{split}
\end{equation}

Hence retention is preferable if and only if \eqref{eq:bayes_keep_rule} holds. Under the stated sign condition, \eqref{eq:bayes_keep_rule} is equivalent to \eqref{eq:bayes_keep_rule_abs}.
\end{proof}

Theorem~\ref{thm:bayes_contamination} provides a principled interpretation of \emph{MLShrink}. The optimal decision on the undecided band is not determined by magnitude alone, but by the posterior mean of the latent coefficient under a contamination model. Since
\begin{eqnarray}
m(d_{j,k},u_{j,k})
=
P(H_{j,k}=1\mid d_{j,k},u_{j,k})
\,E(\theta_{j,k}\mid H_{j,k}=1,d_{j,k},u_{j,k}),
\label{eq:posterior_mean_factorization}
\end{eqnarray}

both posterior inclusion probability and contextual information influence the decision. This gives a formal justification for using a classifier, or more generally a learned score, to approximate the keep-or-discard rule on the undecided band. This perspective is also consistent with earlier neighboring-coefficient and empirical-Bayes wavelet methods, both of which show that local or levelwise context can materially improve coefficient selection beyond purely marginal thresholding \cite{CaiSilverman2001,JohnstoneSilverman2005}.

\subsection{Utility of contextual features}

The contamination-prior theorem explains why features can matter. The next result shows why same-scale neighborhood information and parent information are particularly natural in wavelet denoising for signals with localized irregularities.

Let $\theta_{j,k}=\langle f,\psi_{j,k}\rangle$ denote the noiseless wavelet coefficients of $f$, where $\psi_{j,k}$ is a compactly supported orthonormal wavelet with $r$ vanishing moments and regularity $r>\alpha$. Let
\begin{eqnarray}
\nu^\theta_{j,k}
=
\max\Bigl\{
|\theta_{j,k-1}|,\,
|\theta_{j,k}|,\,
|\theta_{j,k+1}|
\Bigr\}
\label{eq:nu_theta}
\end{eqnarray}

denote a same-scale neighborhood summary, and let
\begin{eqnarray}
p^\theta_{j,k}
=
|\theta_{j+1,\lfloor k/2\rfloor}|
\label{eq:parent_theta}
\end{eqnarray}

denote the parent magnitude.

\begin{proposition}[Utility of neighborhood and parent features]
\label{prop:feature_utility}
Assume that $f$ is piecewise $C^\alpha$ on $[0,1]$ except for a finite set $\mathcal{S}$ of singular points. Suppose that at each singular point $s\in\mathcal{S}$ the local smoothness index is $\beta_s<\alpha$. Then there exist constants $C>0$, $c_s>0$, and an integer $J_0$ such that for all $j\ge J_0$ the following hold.

If the support of $\psi_{j,k}$ and the supports of its immediate same-scale neighbors are separated from $\mathcal{S}$, then
\begin{eqnarray}
\nu^\theta_{j,k}+p^\theta_{j,k}
\le
C\,2^{-j(\alpha+1/2)}.
\label{eq:smooth_decay}
\end{eqnarray}

On the other hand, for each singular point $s\in\mathcal{S}$ and each sufficiently fine level $j$, there exists an index $k_s(j)$ such that the support of $\psi_{j,k_s(j)}$ intersects a neighborhood of $s$ and
\begin{eqnarray}
\nu^\theta_{j,k_s(j)}+p^\theta_{j,k_s(j)}
\ge
c_s\,2^{-j(\beta_s+1/2)}.
\label{eq:singular_decay}
\end{eqnarray}

Consequently,
\begin{eqnarray}
\frac{
\nu^\theta_{j,k_s(j)}+p^\theta_{j,k_s(j)}
}{
\sup\limits_{k:\,{\rm supp}(\psi_{j,k})\cap \mathcal{S}=\emptyset}
\bigl(\nu^\theta_{j,k}+p^\theta_{j,k}\bigr)
}
\ge
\frac{c_s}{C}\,2^{j(\alpha-\beta_s)},
\label{eq:contrast_growth}
\end{eqnarray}

and the contrast between coefficients associated with singular structure and coefficients from smooth regions increases exponentially with scale.
\end{proposition}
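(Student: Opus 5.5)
The plan is to use the standard wavelet-coefficient decay estimates: vanishing moments and a Taylor expansion give the upper bound in smooth regions, and the local smoothness index at a singular point gives the lower bound. Throughout, the normalization convention matters. With $\psi_{j,k}(x)=2^{j/2}\psi(2^jx-k)$ and $j$ increasing toward finer scales, the "parent" in \eqref{eq:parent_theta} must be read as the coarser coefficient. Its support contains that of $\psi_{j,k}$ and has length about $2^{-(j-1)}$, so the parent decays at the rate $2^{-(j-1)(\alpha+1/2)}$. This differs from $2^{-j(\alpha+1/2)}$ only by the constant factor $2^{\alpha+1/2}$, which is absorbed into $C$ (or $c_s$). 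I would make this convention explicit at the start. I would also fix $J_0$ so large that, for $j\ge J_0$, the supports of $\psi_{j,k}$, its two neighbours, and its parent have length below half the minimal distance between points of $\mathcal{S}$ and the boundary effects.

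\textbf{Upper bound \eqref{eq:smooth_decay}.} Suppose the support of $\psi_{j,k}$, its neighbours and its parent is disjoint from $\mathcal{S}$. Then $f$ is $C^\alpha$ on the union $I$ of these supports, with a uniform H\"older constant $K$ since $[0,1]\setminus\mathcal{S}$ splits into finitely many pieces. I would expand $f$ around $x_0=2^{-j}k$ by Taylor's formula of order $\lfloor\alpha\rfloor$. Because $r>\alpha$, the vanishing moments annihilate the polynomial part, and the remainder is bounded by $K|x-x_0|^\alpha\le K'2^{-j\alpha}$ on the support. Integrating against $|\psi_{j,k}|$, whose $L^1$ norm is $2^{-j/2}\|\psi\|_1$, gives $|\theta_{j,k}|\le K''2^{-j(\alpha+1/2)}$. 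The same bound holds for $k\pm1$ and, up to the constant above, for the parent; summing the four terms yields $C$.

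\textbf{Lower bound \eqref{eq:singular_decay}.} This is the main obstacle, because it needs a precise meaning of "local smoothness index $\beta_s$". Mere failure of $C^{\beta}$ regularity for $\beta>\beta_s$ only gives decay failing along a subsequence of scales. I would therefore interpret $\beta_s$ through the standard characterisation of pointwise H\"older exponents, and add a nondegeneracy assumption: for each $s$ there is $c_s>0$ with $\max_{k:\,|2^{-j}k-s|\le A2^{-j}}|\theta_{j,k}|\ge c_s2^{-j(\beta_s+1/2)}$ for all large $j$. For jumps ($\beta_s=0$) or cusps $|x-s|^{\beta_s}$, this can be verified directly: substitute $x=s+2^{-j}t$, so that $\theta_{j,k}=2^{-j(\beta_s+1/2)}\int g(t)\psi(t-(k-2^js))\,dt$ up to a smooth remainder of order $2^{-j(\alpha+1/2)}$. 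The integral is a nonzero continuous periodic-in-shift function, so its maximum over the finitely many admissible shifts is bounded below. Then I would choose $k_s(j)$ to be the maximizing index. Since $\nu^\theta_{j,k}\ge|\theta_{j,k}|$ and $p^\theta\ge0$, the bound \eqref{eq:singular_decay} follows.

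\textbf{Ratio \eqref{eq:contrast_growth}.} I would divide \eqref{eq:singular_decay} by the supremum of \eqref{eq:smooth_decay}. This supremum is taken over smooth-region indices, which requires, as in the first step, that neighbours and parent also avoid $\mathcal{S}$; I would restrict the supremum accordingly or enlarge the exclusion by a fixed number of indices. The result is $(c_s/C)2^{j(\alpha-\beta_s)}$, which grows exponentially since $\beta_s<\alpha$.
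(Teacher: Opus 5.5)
Your route is the paper's own: vanishing moments plus Taylor expansion for \eqref{eq:smooth_decay}, slower decay at singular points for \eqref{eq:singular_decay}, and division for \eqref{eq:contrast_growth}. Where you depart from the paper's brief sketch, you repair real defects. With index $\lfloor k/2\rfloor$ the parent must be the coarser coefficient; the paper's ``level $j+1$'' clashes with $j$ increasing toward fine scales. For wavelets longer than Haar the coarser parent's support can extend beyond those of $\psi_{j,k\pm1}$, so it must also be assumed to avoid $\mathcal{S}$. The supremum in \eqref{eq:contrast_growth} as written runs over indices whose neighbour or parent straddles a singular point, where \eqref{eq:smooth_decay} is unavailable; for Haar this even keeps the ratio bounded. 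Finally, the paper's assertion that some coefficient near $s$ ``must inherit'' the slower decay is exactly the unproved nondegeneracy you isolate. With that nondegeneracy stated as a hypothesis, your argument proves the amended statement.

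Two of your supporting claims are wrong, though. First, a uniform H\"older constant does not follow from $[0,1]\setminus\mathcal{S}$ having finitely many components; you need $f$ to be $C^\alpha$ up to the endpoints of each piece. Under that reading, near each $s$ one can write $f=f_-+g\,\mathbf{1}\{x>s\}$ with $f_-,g\in C^\alpha$. Then $\beta_s$ is the order of the first nonvanishing derivative of $g$ at $s$, an integer (jumps, kinks), so non-integer cusps $|x-s|^{\beta_s}$ lie outside the class. They cannot be admitted either. For $f(x)=|x-s|^{\beta}$ one has $\theta_{j,k}=2^{-j(\beta+1/2)}F(k-2^js)$ with $F(\tau)=\int|u+\tau|^{\beta}\psi(u)\,du$. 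This $F$ is nonzero for all large $\tau$, since its leading term is a nonzero multiple of $\tau^{\beta-N}$, where $N$ is the exact number of vanishing moments. Hence coefficients at a fixed index distance from $s$, whose own, neighbouring and parent supports all miss $s$, are of order $2^{-j(\beta+1/2)}$. This contradicts \eqref{eq:smooth_decay} and keeps the ratio in \eqref{eq:contrast_growth} bounded, even after excluding any fixed number of indices. For cusps you would need separation by a fixed distance $\delta$, with $C=C(\delta)$.

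Second, in your verification of the nondegeneracy, $F\not\equiv 0$ together with continuity and periodicity in the offset does not yield a positive lower bound. You need that for every offset $\vartheta\in[0,1)$ some admissible translate $F(m-\vartheta)$ is nonzero, i.e.\ that the integer translates of $F$ have no common zero. Haar already fails this: every Haar detail coefficient of $\mathbf{1}\{x>1/2\}$ with support length at most $1/2$ vanishes. For smoother wavelets this property must be proved or assumed. The nondegeneracy should therefore remain an explicit hypothesis rather than be presented as automatic for jumps and cusps.
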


\begin{proof}
The bound \eqref{eq:smooth_decay} is a standard consequence of wavelet regularity theory: away from singularities, compactly supported wavelets with $r>\alpha$ vanishing moments yield coefficient decay of order $2^{-j(\alpha+1/2)}$. Since the same argument applies to neighboring coefficients and to the parent coefficient at level $j+1$, their combined summary is of the same order.

For coefficients whose support intersects a singular point with local smoothness index $\beta_s<\alpha$, wavelet coefficients decay at the slower rate $2^{-j(\beta_s+1/2)}$. Because the wavelet support is compact and dyadic supports are nested across scales, at least one coefficient in a fixed same-scale neighborhood of the singular location, together with its parent coefficient, must inherit this slower decay. This yields \eqref{eq:singular_decay}. Dividing \eqref{eq:singular_decay} by \eqref{eq:smooth_decay} gives \eqref{eq:contrast_growth}.
\end{proof}

Proposition~\ref{prop:feature_utility} gives a mathematical explanation for why neighborhood and parent features are useful. In smooth regions, both same-scale and inter-scale coefficients decay rapidly. Near singularities, however, coefficients persist across adjacent locations and scales. Thus contextual summaries such as neighbor magnitude, parent magnitude, local energy, and short-range inter-scale slopes are not ad hoc additions, but natural descriptors of wavelet-domain signal structure.

The same viewpoint suggests that the feature set used by \emph{MLShrink} can be expanded in several sensible directions. A first extension concerns \emph{local energy} and \emph{local noise scale}. For example, one may use
\begin{eqnarray}
E^{(m)}_{j,k}
=
\sum_{|r-k|\le m} d_{j,r}^2
\label{eq:local_energy_feature}
\end{eqnarray}

as a local energy summary, together with a robust local noise estimate such as
\begin{eqnarray}
\hat\sigma^{\,loc}_{j,k}
=
1.4826\,
{\rm MAD}\Bigl\{
d_{j,r}: |r-k|\le m
\Bigr\},
\label{eq:local_mad_feature}
\end{eqnarray}

which may help detect heteroscedastic or non-i.i.d.\ noise.

A second extension concerns \emph{autocorrelation} and \emph{interscale dependence}. If the noise is colored rather than white, then short-lag autocorrelation and parent-child dependence may themselves be informative. Natural examples are
\begin{eqnarray}
\rho^{(1)}_{j,k}
=
\frac{
\sum_{|r-k|\le m} d_{j,r}d_{j,r-1}
}{
\sum_{|r-k|\le m} d_{j,r}^2+\varepsilon
},
\label{eq:local_ac_feature}
\end{eqnarray}

and
\begin{eqnarray}
\rho^{\,ps}_{j,k}
=
\frac{
d_{j,k}\,d_{j+1,\lfloor k/2\rfloor}
}{
\sqrt{d_{j,k}^2+\varepsilon}\,
\sqrt{d_{j+1,\lfloor k/2\rfloor}^2+\varepsilon}
},
\label{eq:parent_child_corr_feature}
\end{eqnarray}

which measure, respectively, local same-scale autocorrelation and parent-child coherence. These features are especially relevant when the transformed noise is not approximately white within each subband.

A third extension concerns \emph{Bayesian information}. In the contamination-prior setting, it is natural to augment the feature vector by posterior summaries such as the local posterior inclusion probability
\begin{eqnarray}
\hat\pi^{\,post}_{j,k}
=
P\Bigl(
H_{j,k}=1 \mid d_{\mathcal{N}(j,k)},u_{j,k}
\Bigr)
\label{eq:post_inclusion_feature}
\end{eqnarray}

and the local posterior mean
\begin{eqnarray}
\hat m^{\,post}_{j,k}
=
E\Bigl(
\theta_{j,k}\mid d_{\mathcal{N}(j,k)},u_{j,k}
\Bigr),
\label{eq:post_mean_feature}
\end{eqnarray}

where $\mathcal{N}(j,k)$ denotes a local neighborhood in location and scale. Such quantities summarize, in a model-based way, how likely a coefficient is to be signal-bearing and how large the underlying clean coefficient is expected to be.

A fourth extension concerns \emph{local smoothness and regularity}. Beyond the simple parent-child slope already used in \eqref{eq:parent_theta}, one may use local regularity surrogates based on wavelet leaders or related multiscale maxima. For example, letting
\begin{eqnarray}
\ell_{j,k}
=
\sup_{\lambda'\subset 3\lambda_{j,k},\,j'\ge j}
|d_{\lambda'}|,
\label{eq:wavelet_leader_feature}
\end{eqnarray}

one may define the empirical leader slope
\begin{eqnarray}
\alpha^{\,lead}_{j,k}
=
-\log_2
\frac{\ell_{j+1,\lfloor k/2\rfloor}+\varepsilon}{\ell_{j,k}+\varepsilon},
\label{eq:leader_slope_feature}
\end{eqnarray}

which acts as a local smoothness descriptor. Features of this kind may help distinguish isolated noisy coefficients from coefficients generated by genuine singular or cusp-like structure.

Thus, beyond magnitude, level, neighbors, and parents, one may incorporate local energy, robust local noise scale, same-scale autocorrelation, parent-child coherence, posterior inclusion probability, posterior mean, and wavelet-leader-based smoothness surrogates. In practical implementations, these features can be used either individually or in combination, depending on whether the main source of difficulty is non-smooth signal structure, correlated noise, heteroscedasticity, or uncertainty about coefficient inclusion.

\section{Performance Evaluation}\label{sec_perfom_eval}

This section evaluates the empirical performance of \emph{MLShrink} under varying noise levels, threshold choices, signal classes, and classifier selections. We also compare \emph{MLShrink} with a collection of established wavelet-domain denoising rules in order to assess its competitiveness.

The study uses four standard one-dimensional benchmark signals, namely \texttt{Blocks}, \texttt{Heavisine}, \texttt{Doppler}, and \texttt{Bumps}. These signals represent a range of structural behaviors, including smooth oscillation, local irregularity, and sharp discontinuities; see Figure~\ref{fig:test_signals}. Following common practice in the denoising literature, we use the Haar and Daubechies 6-tap wavelet filters for the \texttt{Blocks} and \texttt{Bumps} signals, and the Symmlet 8-tap filter for the \texttt{Heavisine} and \texttt{Doppler} signals.

\begin{figure}
\centering

\begin{subfigure}{0.48\columnwidth}
    \centering
    \includegraphics[width=\linewidth]{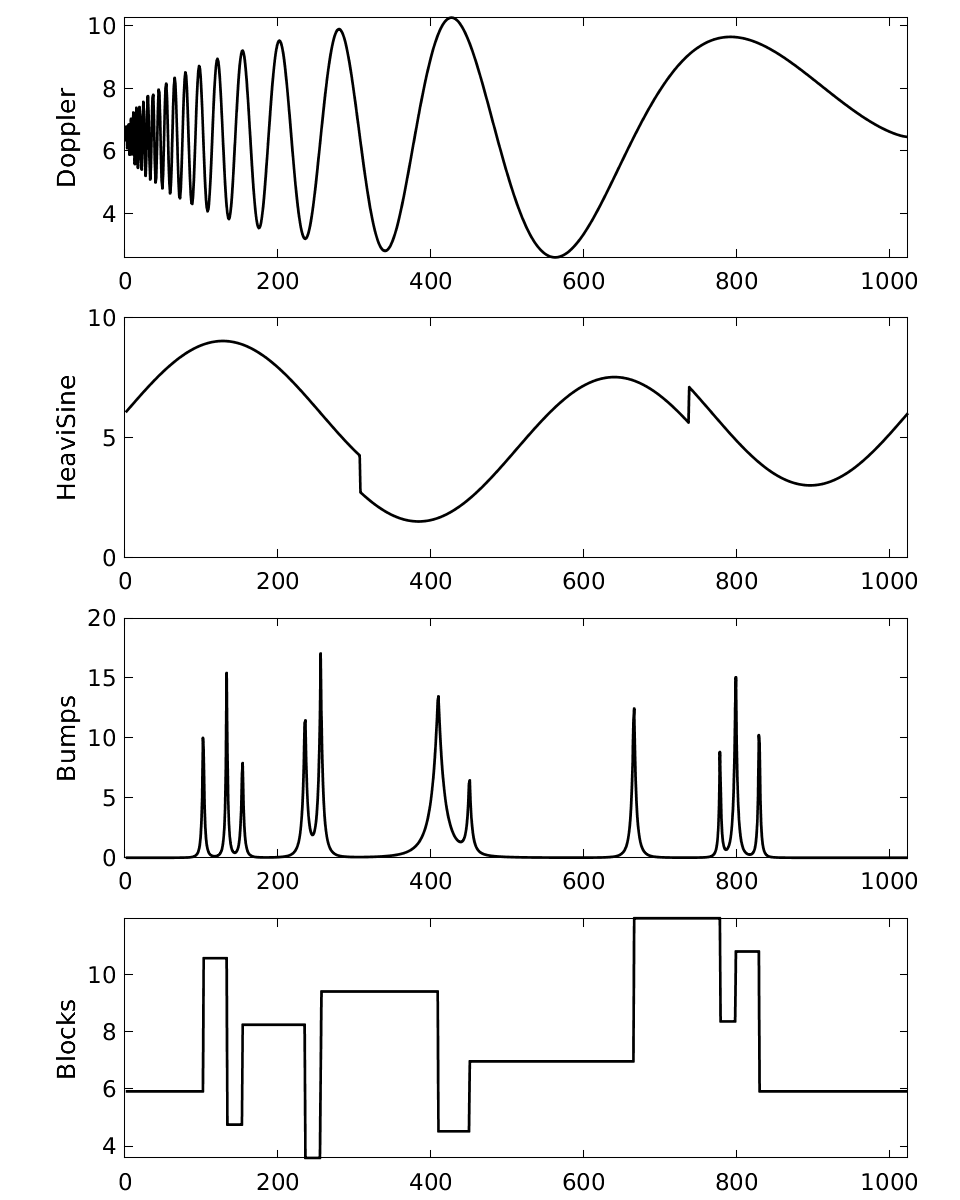}
    \caption{}
    \label{fig:Original_signal}
\end{subfigure}
\hfill
\begin{subfigure}{0.48\columnwidth}
    \centering
    \includegraphics[width=\linewidth]{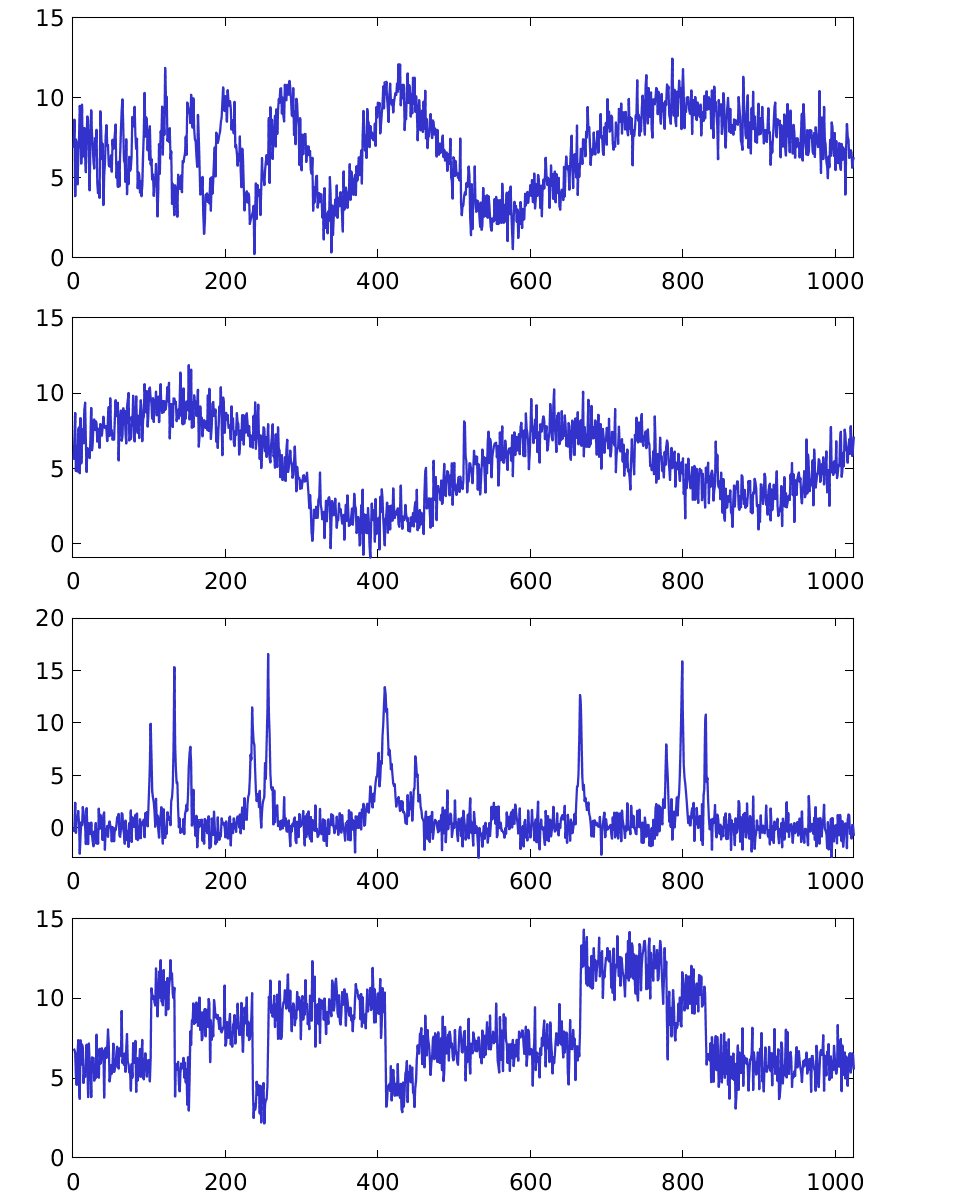}
    \caption{}
    \label{fig:Noisy_signal}
\end{subfigure}
\caption{Four benchmark signals used in the simulation study: (a) clean signals and (b) corresponding noisy realizations.}
\label{fig:test_signals}
\end{figure}

Performance is measured by the average mean squared error (AMSE),
\begin{eqnarray}
AMSE(f) = \frac{1}{nN}\sum_{r=1}^{N}\sum_{i=1}^{n}
\Bigl(f(t_i)-\hat f_r(t_i)\Bigr)^2,
\label{eq:amse}
\end{eqnarray}

where $f$ denotes the underlying signal, $\hat f_r$ is the reconstructed signal in replication $r$, $n$ is the signal length, and $N$ is the number of Monte Carlo replications. In all experiments we use $N=100$ replications.

All simulations were carried out in MATLAB. For the competing denoising rules, we used the GaussianWaveDen toolbox; see also \cite{antoniadis2001wavelet} for related background.
\subsection{Simulation Design}

To generate noisy observations, independent Gaussian noise with variance $\sigma^2=1$ was added to each benchmark signal after rescaling the signal to achieve a prescribed signal-to-noise ratio (SNR). Each noisy realization consisted of $n=1024$ equally spaced samples on $[0,1]$. Figure~\ref{fig:Noisy_signal} shows representative noisy versions of the four test signals.

\subsection{Parameter Selection}\label{sec_hyperParameters}

The empirical performance of \emph{MLShrink} depends primarily on the lower threshold parameter $c$ through
$$
\lambda_1 = \hat \sigma \sqrt{c\log n},
$$

the noise level, the classifier $\mathcal{C}$ used on the undecided band, and the structural features of the underlying signal. The upper threshold is fixed at
$$
\lambda_2 = \hat \sigma \sqrt{2\log n}.
$$

Accordingly, our parameter study focuses on the choice of $\hat \sigma$, the lower-threshold parameter $c$, and the classifier $\mathcal{C}$.

\paragraph{Noise estimation.}
The noise level was estimated from the finest-scale detail coefficients according to
$$
\hat \sigma = \sqrt{\operatorname{Var}(d_{1,k})},
$$

where $d_{1,k}$ denotes the empirical detail coefficients at the finest resolution level and $J=\log_2(n)$.

\paragraph{Selection of the lower threshold.}
To determine a suitable lower threshold, we varied $c$ over the interval $[0.2,2.0]$ and selected the value minimizing AMSE for a fixed signal, classifier, and SNR level. Figure~\ref{fig:Block_performance} illustrates this procedure for the \texttt{Blocks} signal using a Random Forest classifier at SNR $=5$. In that example, the minimum AMSE occurs at $c=1.2$, and the reconstructed signal closely tracks the true signal. The corresponding hyperparameter configuration for the
Random Forest model used in Figure~\ref{fig:Block_performance} is reported in
Table~SI 1 in the Supplementary Information.

\begin{figure}
\centering

\begin{subfigure}{0.475\columnwidth}
    \centering
    \includegraphics[width=1\linewidth]{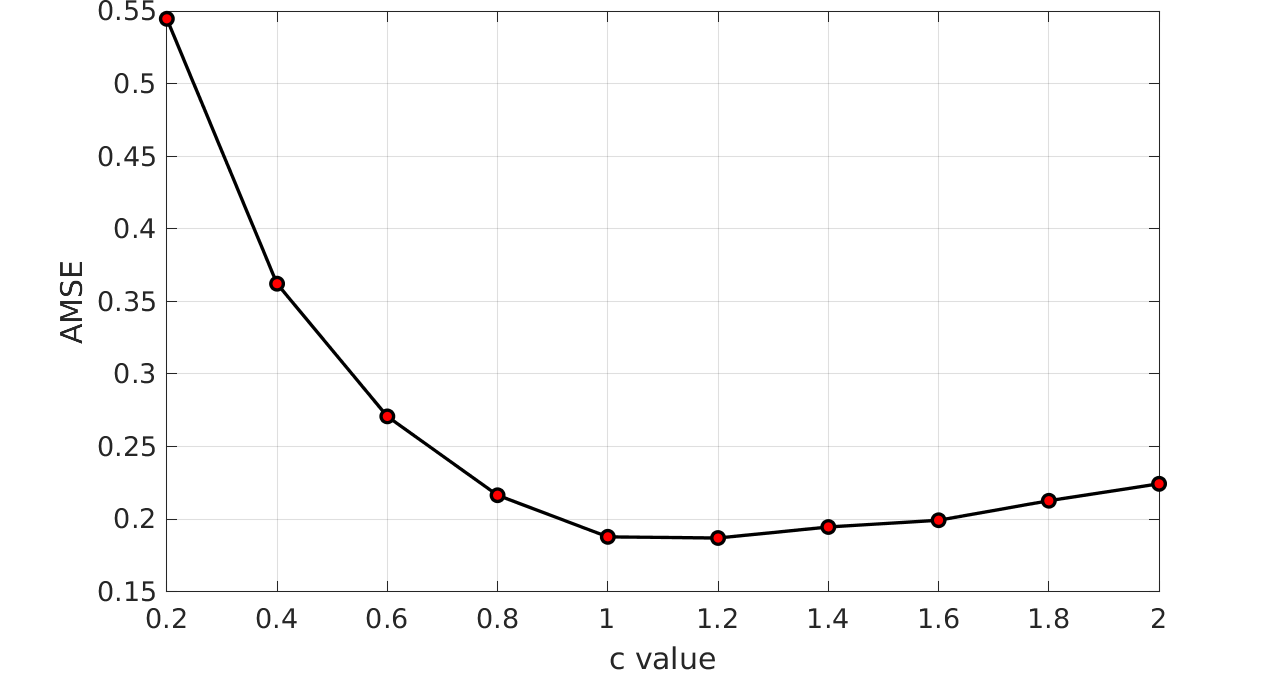}
    \caption{}
    \label{fig:AMSE_with_c}
\end{subfigure}
\hfill
\begin{subfigure}{0.48\columnwidth}
    \centering
    \includegraphics[width=1.1\linewidth]{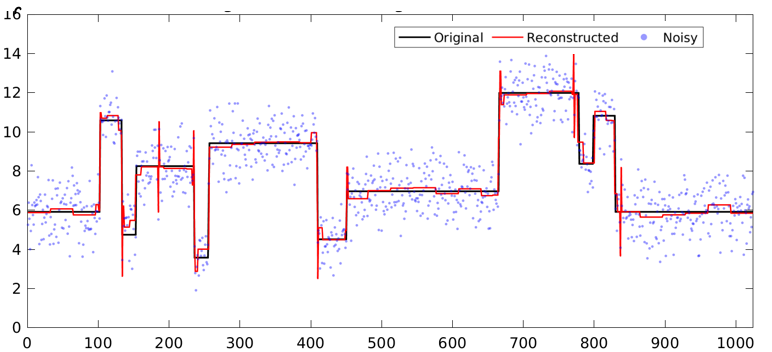}
    \caption{}
    \label{fig:Reconstructed_signal}
\end{subfigure}
\caption{Selection of the lower threshold for \emph{MLShrink}: (a) AMSE as a function of $c$ and (b) reconstructed \texttt{Blocks} signal at the empirically selected value $c=1.2$. The signal, classifier, and SNR were fixed at \texttt{Blocks}, Random Forest, and SNR $=5$, respectively.}
\label{fig:Block_performance}
\end{figure}

\paragraph{Sensitivity to noise level.}
To examine how the choice of $c$ changes with contamination level, we repeated the same analysis for SNR values 3, 5, and 7 while keeping the signal and classifier fixed. Figure~\ref{fig:sensitivity_SNR_change} shows that the AMSE profile as a function of $c$ changes with the noise level, which confirms that the lower threshold should be tuned adaptively rather than fixed a priori across all settings. The optimized parameter settings corresponding to
Figure~\ref{fig:sensitivity_SNR_change} across different SNR levels are reported in
Table~SI 2 in the Supplementary Information.

\begin{figure}
\centering
\includegraphics[width=0.5\textwidth]{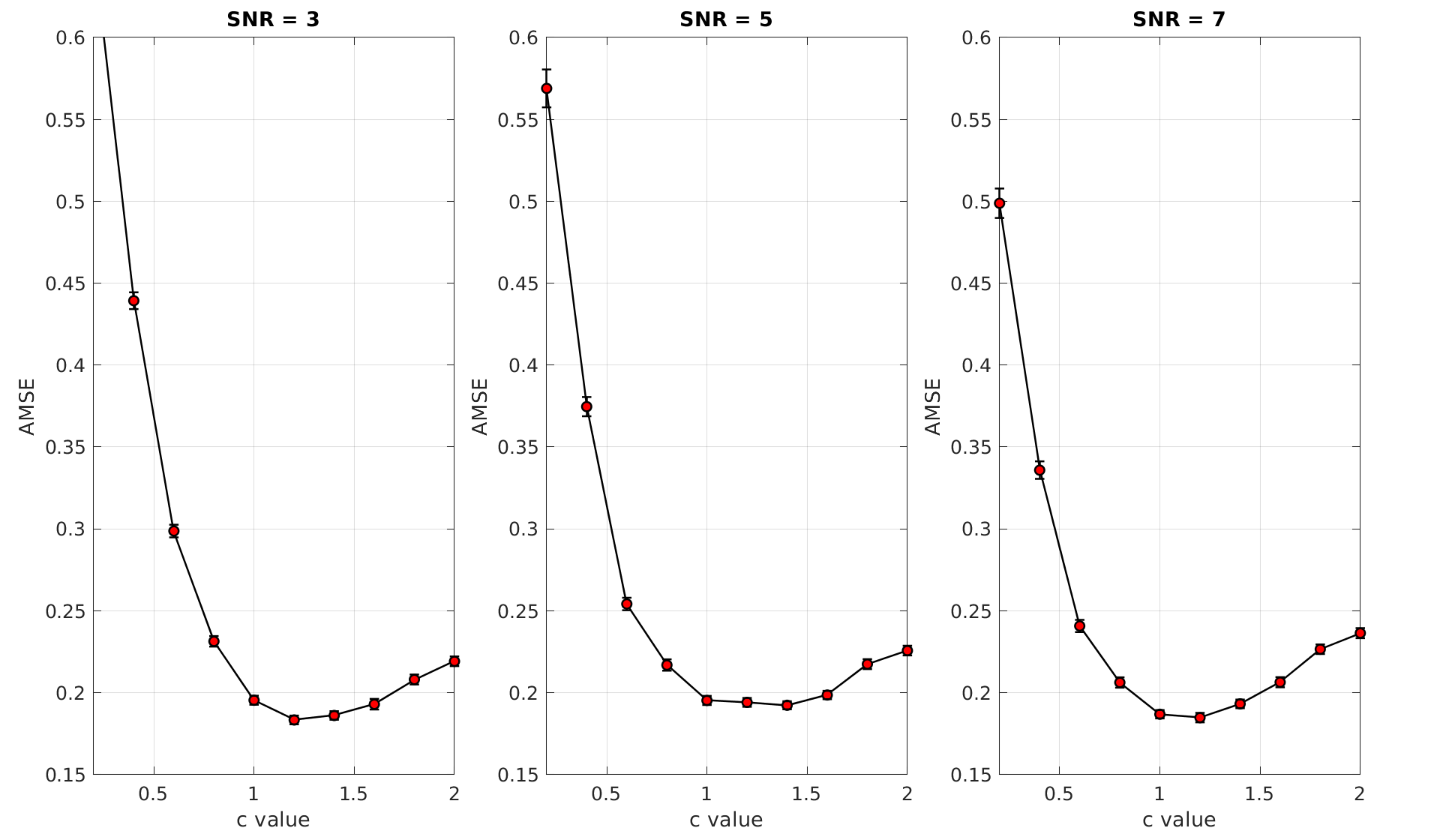}
\includegraphics[width=0.5\textwidth]{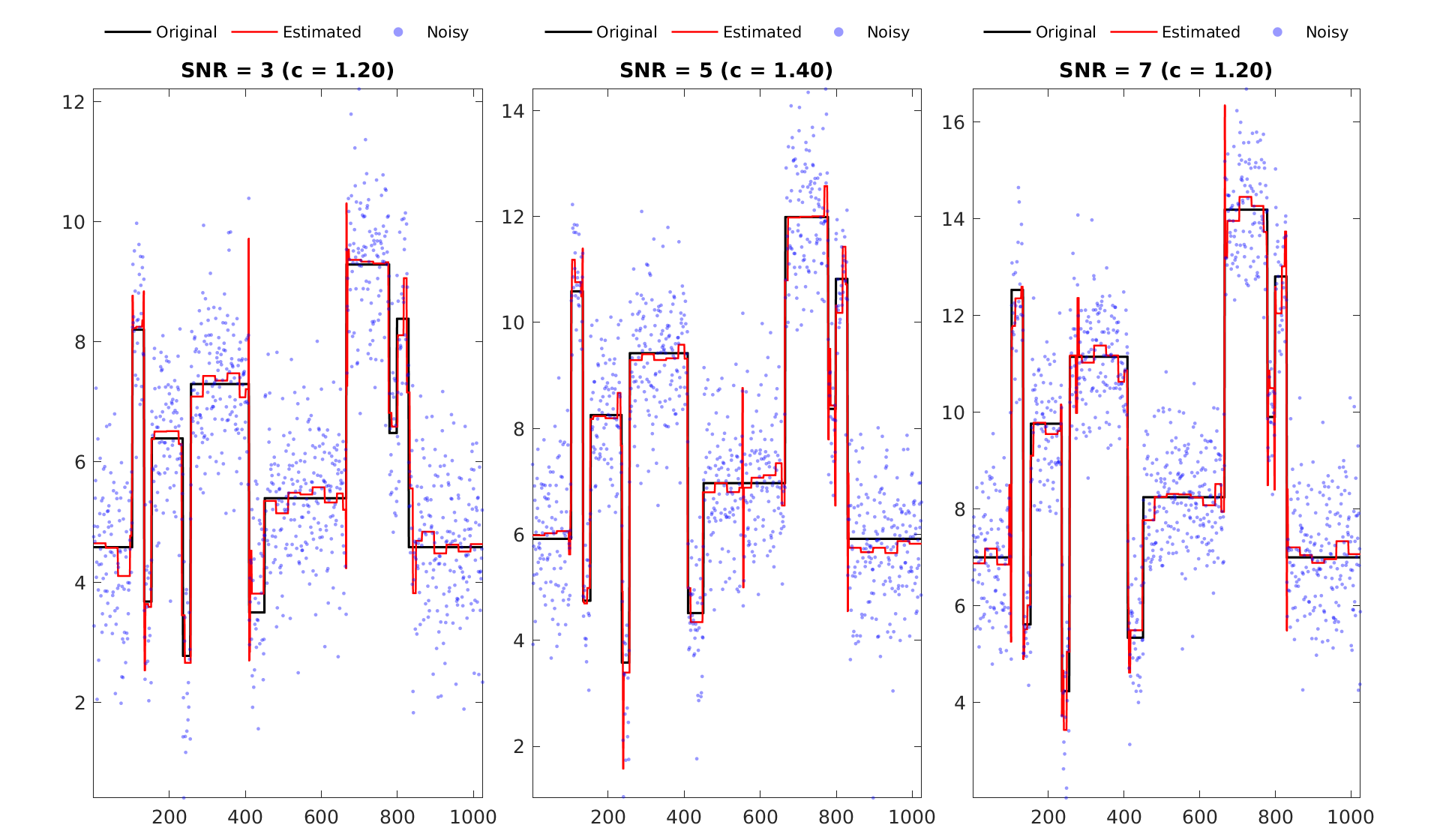}
\caption{Sensitivity of \emph{MLShrink} to noise level for the \texttt{Blocks} signal with a Random Forest classifier. The top panel shows AMSE as a function of $c$, and the bottom panel shows the corresponding reconstructions at the selected values of $c$.}
\label{fig:sensitivity_SNR_change}
\end{figure}

\paragraph{Selection of the classifier.}
The classifier used to label the undecided coefficients is a central component of \emph{MLShrink}. We therefore compared Logistic Regression (\emph{LR}), Support Vector Machines (\emph{SVM}), Random Forests (\emph{RF}), Decision Trees (\emph{DT}), and Neural Networks (\emph{NN}) across a range of values of $c$ and SNR levels. The hyperparameter configurations for each classifier
are summarized in Table~SI 3
in the Supplementary Information. Table~\ref{tab:c-amse} summarizes the best-performing value of $c$ and the corresponding AMSE for the \texttt{Blocks} signal. According to the current numerical results, Random Forest yields the lowest AMSE at SNR $=3$, while Decision Tree yields the lowest AMSE at SNR $=5$ and SNR $=7$. These findings suggest that the optimal classifier may depend on both the signal geometry and the contamination level.

\begin{table*}
\centering
\begin{tabular}{lccccc}
\toprule
 & \multicolumn{5}{c}{\textbf{Classifier}} \\
\cmidrule(lr){2-6}
\textbf{SNR} &
\begin{tabular}[c]{@{}c@{}}\textbf{LR}\\ $c$ / AMSE\end{tabular} &
\begin{tabular}[c]{@{}c@{}}\textbf{SVM}\\ $c$ / AMSE\end{tabular} &
\begin{tabular}[c]{@{}c@{}}\textbf{RF}\\ $c$ / AMSE\end{tabular} &
\begin{tabular}[c]{@{}c@{}}\textbf{DT}\\ $c$ / AMSE\end{tabular} &
\begin{tabular}[c]{@{}c@{}}\textbf{NN}\\ $c$ / AMSE\end{tabular} \\
\midrule
3 & 1.4 / 0.2147 & 1.4 / 0.2164 & {\bf 1.2 / 0.1830} & 1.2 / 0.1883 & 1.4 / 0.1852 \\
5 & 0.2 / 0.2222 & 0.4 / 0.2224 & 1.2 / 0.1851 & {\bf 1.2 / 0.1824} & 1.2 / 0.1828 \\
7 & 0.6 / 0.2306 & 1.0 / 0.2315 & 1.2 / 0.1827 & {\bf 1.2 / 0.1813} & 1.2 / 0.1838 \\
\bottomrule
\end{tabular}
\caption{Best-performing values of $c$ and the corresponding AMSE for the \texttt{Blocks} signal under five classifiers and three SNR levels. Boldface indicates the smallest AMSE within each SNR level.}
\label{tab:c-amse}
\end{table*}

\subsection{Comparative Performance}

We next evaluated \emph{MLShrink} across the four benchmark signals and compared it with a range of classical and modern wavelet shrinkage procedures.

\paragraph{Performance across signal classes.}
For each signal and each SNR level, we selected the empirically best combination of classifier and lower-threshold parameter $c$ from the candidate set considered above. Table~\ref{tab:Comparison} reports the selected classifier, the corresponding $c$ value, and the resulting AMSE for each signal-SNR configuration. Figure~\ref{fig:Reconstruction_SNR5} presents representative reconstructions at SNR = 5.
Additional reconstructions for SNR = 3 and SNR = 7 are
provided in Figure~SI 1 and Figure~SI 2 in the Supplementary Information.

\begin{figure}
	\centering
\includegraphics[width=0.8\textwidth]{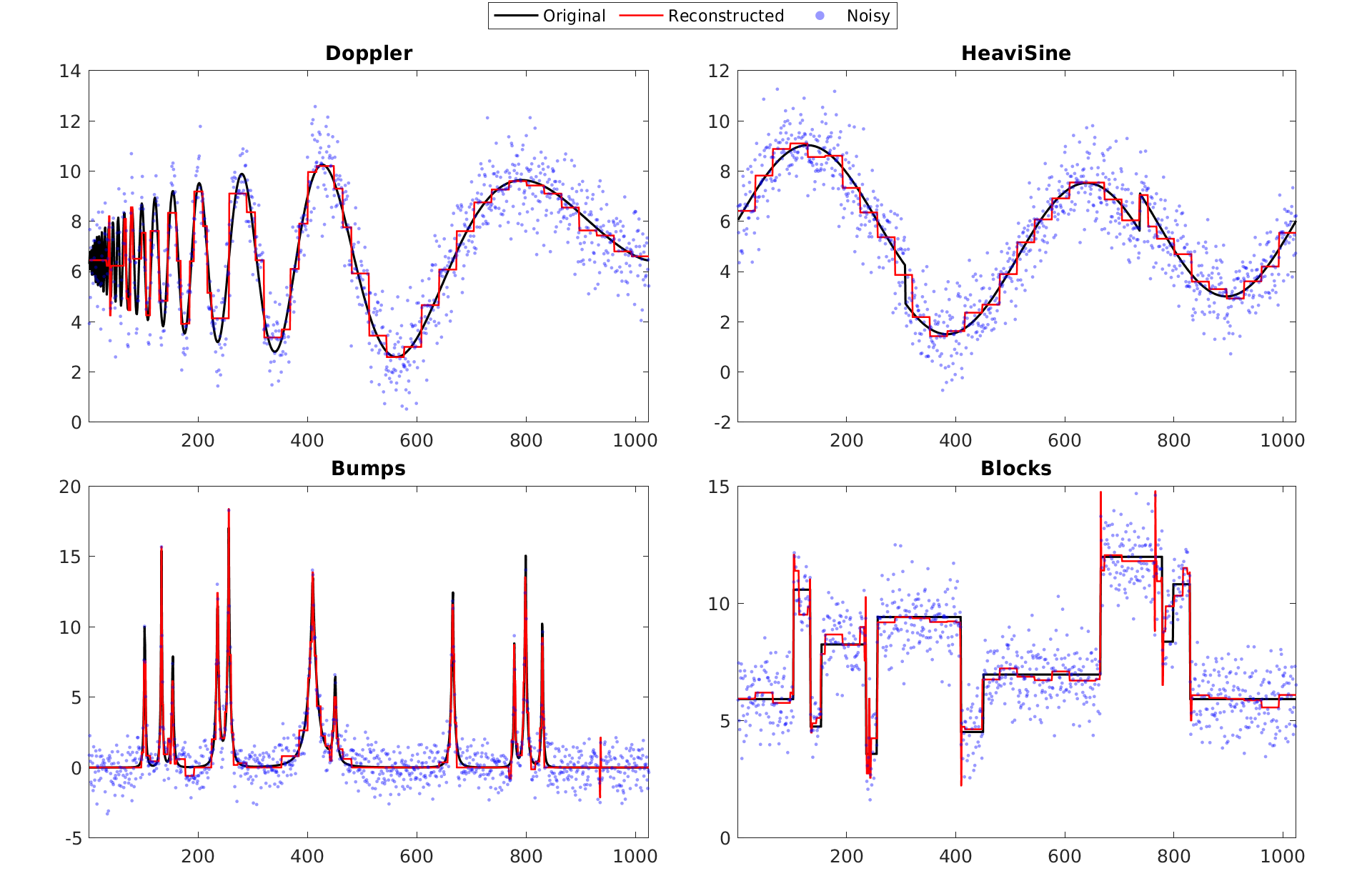}
\caption{Representative \emph{MLShrink} reconstructions for the four benchmark signals at SNR = 5. The black curve is the true signal, the blue points are the noisy observations, and the red curve is the reconstructed signal.}
\label{fig:Reconstruction_SNR5}
\end{figure}

\begin{table*}
\centering
\begin{tabular}{lcccccc}
\toprule
 & \multicolumn{6}{c}{\textbf{SNR}} \\
\cmidrule(lr){2-7}
 & \multicolumn{2}{c}{3} & \multicolumn{2}{c}{5} & \multicolumn{2}{c}{7} \\
\cmidrule(lr){2-3} \cmidrule(lr){4-5} \cmidrule(lr){6-7}
\textbf{Signal} & $c$/Cl. & AMSE & $c$/Cl. & AMSE & $c$/Cl. & AMSE \\
\midrule
Doppler   & 0.80 / LR & 0.1107 & 2.00 / LR & 0.1306 & 1.60 / NN & 0.1495 \\
HeaviSine & 2.00 / DT & 0.0415 & 1.00 / SVM & 0.0464 & 2.00 / NN & 0.0526 \\
Bumps     & 1.00 / NN & 0.3567 & 1.00 / DT & 0.3687 & 1.00 / RF & 0.3845 \\
Blocks    & 1.40 / DT & 0.1799 & 1.20 / DT & 0.1826 & 1.20 / NN & 0.1812 \\
\bottomrule
\end{tabular}
\caption{Empirically selected classifier and lower-threshold parameter for \emph{MLShrink} across the four benchmark signals and three SNR levels.}
\label{tab:Comparison}
\end{table*}

The results indicate that no single classifier dominates uniformly across all settings. This is consistent with the design of \emph{MLShrink}: the learning task on the undecided band depends on the interaction between signal geometry, noise level, and the local feature map. In particular, the results suggest that classifier choice is part of the denoising mechanism rather than a merely secondary tuning component.

\paragraph{Comparison with competing shrinkage rules.} Using the empirically selected values of $c$ and classifier $C$, we compared \emph{MLShrink} with ten competing wavelet-domain procedures: Semi-Soft, Hard, BAMS, DCOMPSH, Block-Median, Block-Mean, Hybrid Block-Median, BlockJS, VisuShrink, and Generalized Cross-Validation. Figure~\ref{fig:Boxplots_SNR5} shows a representative comparison at SNR $=5$. The corresponding comparisons at SNR = 3 and SNR = 7
are provided in Figure~SI 3
and Figure~SI 4 in the Supplementary Information.

\begin{figure}
	\centering
\includegraphics[width=0.8\textwidth]{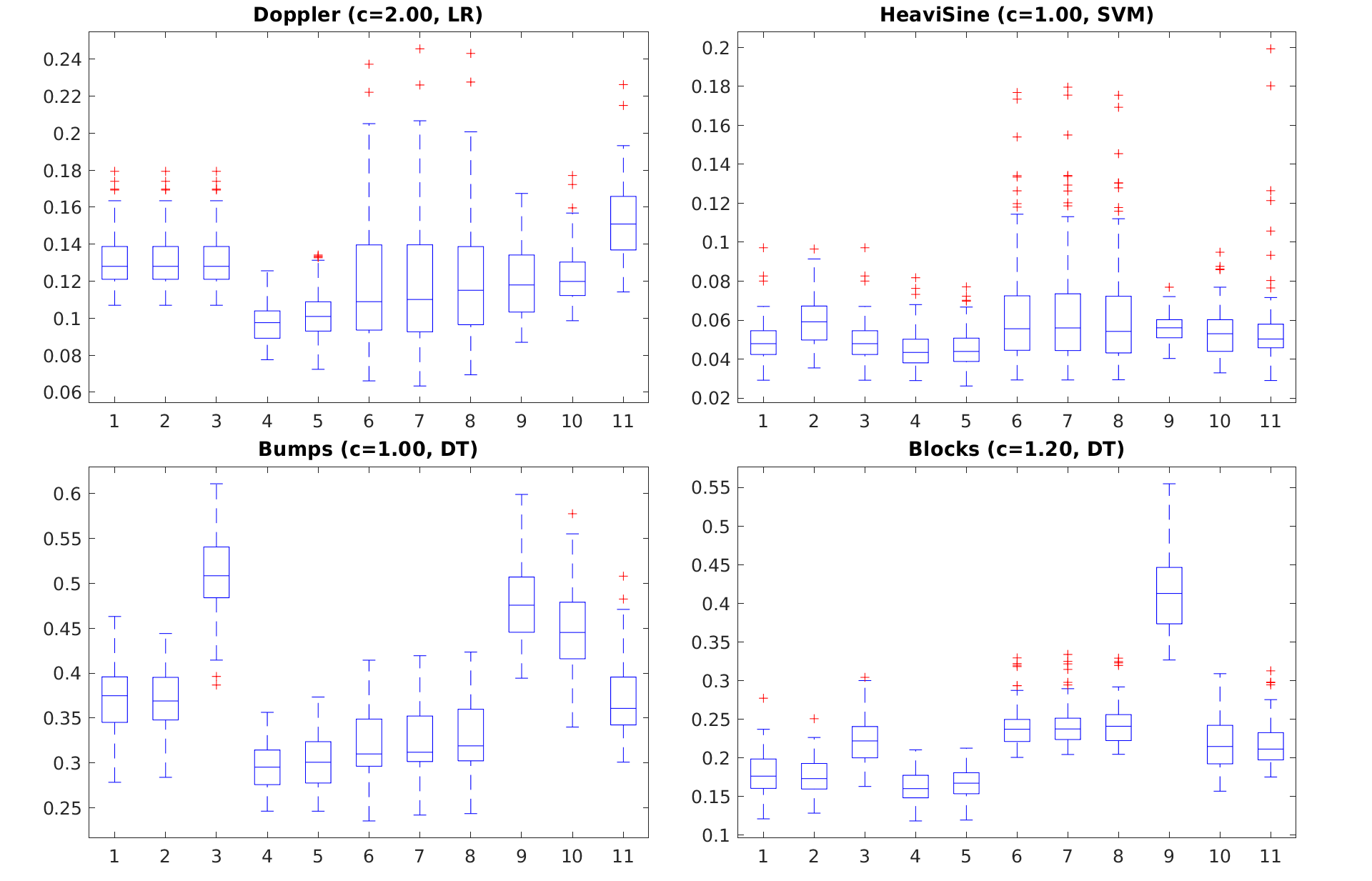}
\caption{Boxplots of AMSE for \emph{MLShrink} and competing denoising procedures at SNR $=5$. The methods shown are: (1) \emph{MLShrink}, (2) Semi-soft, (3) Hard, (4) Bayesian adaptive multiresolution shrinker (BAMS), (5) DCOMPSH, (6) Block-median, (7) Block-mean, (8) Hybrid block-median, (9) BlockJS, (10) VisuShrink, and (11) Generalized cross-validation.}
\label{fig:Boxplots_SNR5}
\end{figure}

Overall, the empirical evidence suggests that \emph{MLShrink} is competitive with the existing shrinkage methods and is particularly promising for signals with pronounced local irregularity or discontinuous structure. For the \texttt{Blocks} signal, \emph{MLShrink} tends to produce smaller reconstruction error than many of the competing methods. For smoother signals such as \texttt{Heavisine}, the differences among methods are narrower, which is consistent with the fact that classical thresholding rules already perform well in smoother settings. More broadly, the simulation results support the interpretation developed in the theoretical section: the principal contribution of \emph{MLShrink} comes from improved decision-making on coefficients in the ambiguous middle band.

The simulation study therefore suggests two main conclusions. First, the performance of \emph{MLShrink} depends materially on both the threshold geometry and the classifier used to resolve the undecided coefficients. Second, despite this dependence, the method remains competitive across a variety of signals and noise levels, with its most visible gains occurring for non-smooth signals where local contextual information is especially informative.

\section{Discussion}\label{sec_discussion}

This paper studied \emph{MLShrink}, a two-threshold wavelet denoising procedure in which coefficients that are clearly small or clearly large are handled deterministically, while coefficients in the intermediate band are classified by a learning rule. The empirical study used four standard benchmark signals, namely \textit{Blocks}, \textit{Bumps}, \textit{HeaviSine}, and \textit{Doppler}, chosen to represent a range of structural behaviors including discontinuities, local spikes, oscillatory structure, and mixed smoothness. Taken together, the numerical results show that \emph{MLShrink} is a competitive denoising procedure and is particularly attractive in settings where a purely magnitude-based thresholding rule may be too rigid.

A central contribution of the present work is that the empirical study is now supported by a theoretical framework tailored to the actual architecture of the method. Rather than treating \emph{MLShrink} as a generic shrinkage family, the theoretical section identifies it as a two-threshold support-selection rule whose only genuinely new statistical component lies in the undecided band. This viewpoint leads to three useful conclusions. First, \emph{MLShrink} is structurally simple: it is nonexpansive in magnitude, sign-preserving under natural feature choices, and identical to a deterministic two-threshold rule outside the intermediate region. Second, its excess quadratic risk relative to an oracle rule is driven entirely by classification errors on the undecided coefficients. Third, under a suitable classifier-consistency assumption, \emph{MLShrink} approaches the oracle performance associated with the same two-threshold architecture. These results do not establish full minimax optimality, but they do provide a mathematically faithful explanation of where the method can help and where it can fail.

The simulation study is consistent with this interpretation. In problems where the keep-or-discard decision is unambiguous, \emph{MLShrink} behaves much like an ordinary thresholding rule. Its advantage appears when a substantial fraction of the coefficients falls into the gray zone between $\lambda_1$ and $\lambda_2$, where local context can improve the decision. This is especially relevant for signals with edges, jumps, transient peaks, or irregular local structure, where neighboring coefficients often carry meaningful information about whether a middle-band coefficient is signal-like or noise-like.

\subsection{Impact of Noise Level and Threshold Geometry}

The numerical experiments confirm that the behavior of \emph{MLShrink} depends strongly on the noise level and on the geometry of the two thresholds. As the contamination level increases, the relative placement of $\lambda_1$ and $\lambda_2$ changes the size of the undecided band and therefore changes both the classification task and the final reconstruction. In lower-SNR settings, more coefficients are pushed toward the lower region, fewer coefficients are confidently retained, and reconstruction tends to become smoother. In higher-SNR settings, more informative coefficients survive into the middle and upper regions, so the classifier has greater opportunity to improve on a purely deterministic rule.

\begin{figure}
	\centering
\includegraphics[width=0.8\textwidth]{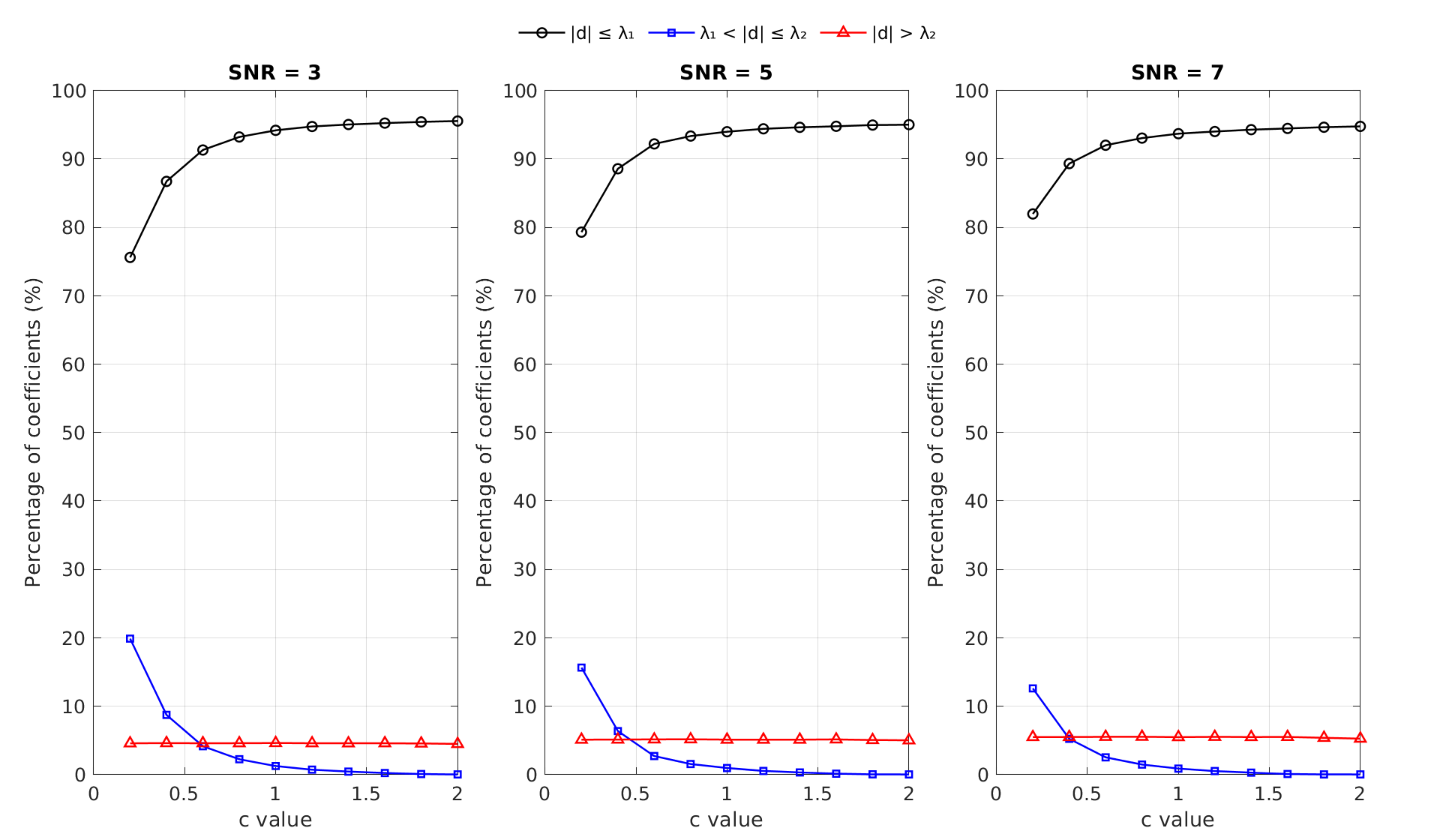}
\caption{Proportion of empirical wavelet coefficients in the three \emph{MLShrink} regions, namely $|d|\leq\lambda_1$, $\lambda_1 < |d| \le \lambda_2$, and $|d|>\lambda_2$, for the \textit{Blocks} signal under SNR $=3,5,7$ and varying values of $c$.}
    \label{fig:c_change_3regions}
\end{figure}

Figure~\ref{fig:c_change_3regions} helps visualize this mechanism. It shows how the proportion of coefficients in the lower, middle, and upper regions changes with $c$ and SNR. This empirical pattern is consistent with the theoretical development in Section~4: the practical behavior of \emph{MLShrink} is governed by how difficult the classification problem becomes on the undecided band and by how informative the available contextual features are in that region. In particular, the size of the undecided band determines how much training information is available and how much scope there is for improvement over deterministic thresholding.

\subsection{Role of Classifier Selection}

The classifier used on the undecided band is not a minor implementation detail but an integral part of \emph{MLShrink}. Different classifiers induce different decision boundaries in the local feature space, and therefore different patterns of coefficient retention. The empirical results suggest that no single classifier dominates uniformly across all signals and SNR levels. This is not surprising. The learning problem faced by \emph{MLShrink} changes with the signal class, the noise level, and the threshold configuration, so the best classifier may also change.

This observation also clarifies the practical meaning of the theory. The theory is intentionally classifier-agnostic: it does not depend on the detailed finite-sample properties of Logistic Regression, Support Vector Machines, Random Forests, Decision Trees, or Neural Networks. Instead, it says that the quality of the final denoiser is controlled by how accurately the classifier recovers the oracle labels on the undecided band. This provides a clean separation between the wavelet-domain architecture of the method and the choice of the learning engine used within that architecture.

\subsection{Comparison with Existing Methods}

The comparison with existing wavelet shrinkage procedures suggests that \emph{MLShrink} is particularly promising for signals with non-smooth structure. For piecewise-constant or edge-dominated signals such as \textit{Blocks}, the method often compares favorably with hard thresholding, semi-soft thresholding, and several other classical procedures. For smoother signals, the differences among methods are smaller, which is also expected: when the signal is sufficiently regular, standard shrinkage rules already perform well and there is less room for a classifier-based correction.

More broadly, the experiments suggest that \emph{MLShrink} should not be viewed as a universal replacement for all denoising rules. Rather, it is a flexible hybrid procedure that appears most useful when coefficient magnitude alone does not provide enough information for a stable decision. In that sense, its best use case is precisely the one emphasized by the theory: settings where the intermediate band contains recoverable signal and local contextual information is informative.

\subsection{Limitations and Future Directions}

The present study still has several limitations. First, the classification step is based on a small feature set consisting mainly of coefficient magnitude, scale information, and a local neighborhood summary. While these features already produce competitive results, they almost certainly do not exhaust the useful information available in the wavelet domain. Richer local and inter-scale descriptors, parent-child relations, wavelet-tree persistence measures, local energy summaries, and covariates derived from the original signal may improve classification accuracy on the undecided band.

Second, the method currently relies on a model-selection stage across several candidate classifiers and threshold parameters. This improves flexibility, but it also increases computational cost. More efficient tuning rules, level-dependent threshold choices, and classifier-specific regularization strategies would make \emph{MLShrink} more scalable.

Third, the present work is restricted to one-dimensional signals and orthonormal wavelet transforms. Extending the methodology to images, volumetric data, wavelet packets, redundant transforms, and nondecimated wavelet systems is a natural next step. Such extensions are particularly appealing because contextual structure across neighboring coefficients is often even stronger in two-dimensional and three-dimensional settings.

A particularly promising future direction is the incorporation of modern deep learning methods into the undecided-band classifier. Instead of using hand-crafted low-dimensional features, one may learn representations directly from local coefficient neighborhoods or from short multiscale coefficient sequences. Convolutional neural networks could be used to learn local wavelet patches, recurrent architectures could capture sequential dependencies across scales, and graph-based neural networks could model parent-child or neighborhood relations in wavelet trees. Recent wavelet-aware deep architectures also suggest concrete paths forward, including multi-level wavelet convolutional networks and learnable wavelet packet models that combine multiscale structure with data-adaptive representation learning \cite{LiuEtAl2018MWCNN,FrusqueFink2024LWPT}.

Transformers are also an appealing possibility. Since wavelet coefficients naturally form structured sequences across locations and scales, transformer-style attention mechanisms may be able to learn long-range dependencies that are invisible to local thresholding rules. A transformer restricted to the undecided band, or supplied with multiscale positional encodings, could in principle learn whether a coefficient should be retained based on a broader contextual field than the current feature map allows. Hybrid procedures are especially attractive here: one may preserve the transparent two-threshold architecture of \emph{MLShrink} while replacing the current shallow classifier by a deeper learned decision module. Such a hybrid would retain interpretability at the thresholding level while gaining representational power in the ambiguous region.

Another important direction is self-supervised or weakly supervised learning in the wavelet domain. Since the initial threshold labels are noisy surrogates rather than true oracle labels, future work could seek iterative refinement schemes, pseudo-labeling strategies, contrastive learning objectives, or uncertainty-aware losses tailored to the undecided band. This may be particularly useful when class imbalance becomes severe or when informative coefficients are rare.

In summary, the present results suggest that \emph{MLShrink} is not only a workable denoising procedure, but also a flexible framework in which wavelet-domain support selection can be coupled with increasingly sophisticated learning mechanisms.

\section{Concluding Remarks}\label{sec_conclusion}

This paper introduced \emph{MLShrink}, a classifier-assisted two-threshold wavelet shrinkage procedure for signal denoising. The method combines the structural simplicity of deterministic thresholding with a data-adaptive learning rule applied only to the coefficients in the undecided band. Coefficients below the lower threshold are discarded, coefficients above the upper threshold are retained, and coefficients in the intermediate region are classified using local wavelet-domain information.

The contribution of the paper is twofold. Methodologically, \emph{MLShrink} offers a flexible alternative to purely magnitude-based thresholding rules by learning the keep-or-discard decision on the undecided band. Theoretically, the paper provides a framework that matches the actual architecture of the method. The structural results show that \emph{MLShrink} remains a nonexpansive, sign-preserving support-selection rule. The oracle risk decomposition shows that its excess risk relative to the best possible middle-band rule is governed entirely by misclassification on the undecided coefficients. The oracle-consistency result further shows that, when the classifier becomes accurate on that band, the denoiser approaches the oracle benchmark associated with the same two-threshold design.

Several directions remain open. These include richer wavelet-domain feature engineering, better handling of class imbalance, more efficient tuning of thresholds and classifiers, extensions to multidimensional data and redundant transforms, and the incorporation of deep learning architectures such as convolutional networks, graph-based models, and transformers into the undecided-band decision step. These developments may substantially broaden the scope of the \emph{MLShrink} framework while preserving the central idea of learned support selection in the wavelet domain.

Overall, \emph{MLShrink} provides a mathematically interpretable and empirically competitive bridge between classical wavelet shrinkage and modern statistical learning. We believe that this hybrid viewpoint offers a useful direction for future research in adaptive denoising. 
\vspace{2mm} \\
\noindent
For reproducibility, the code developed for this study is available at \url{https://github.com/Vijini95/MLShrink}.
\vspace{2mm} \\
\noindent
Supplementary materials are available at:  \url{https://github.com/Vijini95/MLShrink}.
\section*{Acknowledgments}
B. Vidakovic acknowledges the partial support of the H.O. Hartley Chair Foundation and NSF Award 2515246 at Texas A\&M University.

\bibliographystyle{apalike}
\bibliography{reference}

@article{donoho1995adapting,
  title={Adapting to unknown smoothness via wavelet shrinkage},
  author={Donoho, David L and Johnstone, Iain M},
  journal={Journal of the american statistical association},
  volume={90},
  number={432},
  pages={1200--1224},
  year={1995},
  publisher={Taylor \& Francis}
}

@article{donoho1994ideal,
  title={Ideal spatial adaptation by wavelet shrinkage},
  author={Donoho, David L and Johnstone, Iain M},
  journal={biometrika},
  volume={81},
  number={3},
  pages={425--455},
  year={1994},
  publisher={Oxford University Press}
}

@article{donoho1998minimax,
  title={Minimax estimation via wavelet shrinkage},
  author={Donoho, David L and Johnstone, Iain M},
  journal={The annals of Statistics},
  volume={26},
  number={3},
  pages={879--921},
  year={1998},
  publisher={Institute of Mathematical Statistics}
}

@article{antoniadis2001wavelet,
  title={Wavelet estimators in nonparametric regression: a comparative simulation study},
  author={Antoniadis, Anestis and Bigot, Jeremie and Sapatinas, Theofanis},
  journal={Journal of statistical software},
  volume={6},
  pages={1--83},
  year={2001}
}

@article{vimalajeewa2023,
  title={Gamma-minimax wavelet shrinkage for signals with low snr},
  author={Vimalajeewa, Dixon and DasGupta, Anirban and Ruggeri, Fabrizio and Vidakovic, Brani},
  journal={The New England Journal of Statistics in Data Science},
  volume={1},
  number={2},
  pages={159--171},
  year={2023},
  publisher={New England Statistical Society}
}

@incollection{CoifmanDonoho1995TI,
  author    = {Coifman, Ronald R. and Donoho, David L.},
  title     = {Translation-Invariant De-Noising},
  booktitle = {Wavelets and Statistics},
  editor    = {Antoniadis, Anestis and Oppenheim, Georges},
  series    = {Lecture Notes in Statistics},
  volume    = {103},
  pages     = {125--150},
  publisher = {Springer},
  address   = {New York},
  year      = {1995}
}

@article{AbramovichBenjamini1996,
  author  = {Abramovich, Felix and Benjamini, Yoav},
  title   = {Adaptive Thresholding of Wavelet Coefficients},
  journal = {Computational Statistics \& Data Analysis},
  volume  = {22},
  number  = {4},
  pages   = {351--361},
  year    = {1996}
}

@article{GaoBruce1997,
  author  = {Gao, Hong-Ye and Bruce, Andrew G.},
  title   = {WaveShrink with Firm Shrinkage},
  journal = {Statistica Sinica},
  volume  = {7},
  number  = {4},
  pages   = {855--874},
  year    = {1997}
}

@article{AbramovichSapatinasSilverman1998,
  author  = {Abramovich, Felix and Sapatinas, Theofanis and Silverman, Bernard W.},
  title   = {Wavelet Thresholding via a Bayesian Approach},
  journal = {Journal of the Royal Statistical Society: Series B},
  volume  = {60},
  number  = {4},
  pages   = {725--749},
  year    = {1998}
}

@article{HallKerkyacharianPicard1998,
  author  = {Hall, Peter and Kerkyacharian, G{\'e}rard and Picard, Dominique},
  title   = {Block Threshold Rules for Curve Estimation Using Kernel and Wavelet Methods},
  journal = {The Annals of Statistics},
  volume  = {26},
  number  = {3},
  pages   = {922--942},
  year    = {1998}
}

@article{Cai1999,
  author  = {Cai, T. Tony},
  title   = {Adaptive Wavelet Estimation: A Block Thresholding and Oracle Inequality Approach},
  journal = {The Annals of Statistics},
  volume  = {27},
  number  = {3},
  pages   = {898--924},
  year    = {1999}
}

@article{CaiSilverman2001,
  author  = {Cai, T. Tony and Silverman, Bernard W.},
  title   = {Incorporating Information on Neighboring Coefficients into Wavelet Estimation},
  journal = {Sankhy\={a}: The Indian Journal of Statistics, Series B},
  volume  = {63},
  number  = {2},
  pages   = {127--148},
  year    = {2001}
}

@inproceedings{LiuEtAl2018MWCNN,
  author    = {Liu, Pengju and Zhang, Hongzhi and Zhang, Kai and Lin, Liang and Zuo, Wangmeng},
  title     = {Multi-Level Wavelet-CNN for Image Restoration},
  booktitle = {Proceedings of the IEEE/CVF Conference on Computer Vision and Pattern Recognition Workshops},
  year      = {2018},
  pages     = {773--782}
}

@article{FrusqueFink2024LWPT,
  author  = {Frusque, Ga{\"e}tan and Fink, Olga},
  title   = {Robust Time Series Denoising with Learnable Wavelet Packet Transform},
  journal = {Advanced Engineering Informatics},
  volume  = {62},
  pages   = {102669},
  year    = {2024}
}

@article{AntoniadisFan2001,
  author  = {Antoniadis, Anestis and Fan, Jianqing},
  title   = {Regularization of Wavelet Approximations},
  journal = {Journal of the American Statistical Association},
  volume  = {96},
  number  = {455},
  pages   = {939--967},
  year    = {2001}
}

@article{Kulkarnietal2026,
  author  = {Kulkarni, Radhika and Pinheiro, Aluisio and Vidakovic, Brani and Atto, Abdourrahmane M.},
  title   = {Smooth {SCAD}: A Raised Cosine Thresholding Rule for Wavelet Denoising},
  journal = {Mathematics},
  volume  = {14},
  number  = {5},
  pages   = {787},
  year    = {2026},
  doi     = {10.3390/math14050787}
}

@article{KudryavtsevShestakov2024,
  author  = {Kudryavtsev, Alexey A. and Shestakov, Oleg V.},
  title   = {Properties of the SURE Estimates When Using Continuous Thresholding Functions for Wavelet Shrinkage},
  journal = {Mathematics},
  volume  = {12},
  number  = {23},
  pages   = {3646},
  year    = {2024},
  doi     = {10.3390/math12233646}
}

@article{JohnstoneSilverman2005,
  author  = {Johnstone, Iain M. and Silverman, Bernard W.},
  title   = {Empirical Bayes Selection of Wavelet Thresholds},
  journal = {The Annals of Statistics},
  volume  = {33},
  number  = {4},
  pages   = {1700--1752},
  year    = {2005},
  doi     = {10.1214/009053605000000345}
}
\end{document}